\documentclass[10pt,letterpaper]{article}

\usepackage{fullpage}

\usepackage{amsmath, graphicx, epsfig, color, amsfonts, algorithm, amssymb, mathtools, enumitem, array}
\usepackage{soul}
\usepackage{amsthm}
\usepackage{lineno}

\usepackage[font=footnotesize]{caption}
\usepackage{wrapfig}

\usepackage{subcaption}
\usepackage{float}

\usepackage[normalem]{ulem}

\graphicspath{./figures/}

\usepackage{version,xspace}

\def\real{\mathbb{R}}

\newcommand{\until}[1]{\{1,\dots, #1\}}
\newcommand{\subscr}[2]{#1_{\textup{#2}}}
\newcommand{\supscr}[2]{#1^{\textup{#2}}}

\newcommand{\map}[3]{#1: #2 \mapsto #3}

\newcommand{\norm}[1]{\left\lVert#1\right\rVert}

\newcommand\bit[1]{\textit{\textbf{#1}}}

\def \bs {\boldsymbol}
\def \mc {\mathcal}
\def\tran{^\top}
\def \etal {\emph{et al.}}
\def \mr {\mathrm}

\def\real{\mathbb{R}}

\newtheorem{theorem}{Theorem}

\newtheorem{lemma}{Lemma}

\newtheorem{remark}{Remark}

\newtheorem{definition}{Definition}


\begin{document}

\title{Human Motor Learning Dynamics in High-dimensional Tasks\thanks{This work was supported by NSF Grant CMMI 1940950.}}

\author{Ankur Kamboj, Rajiv Ranganathan, Xiaobo Tan, and Vaibhav Srivastava
\thanks{ Ankur Kamboj ({\tt\small ankurank@msu.edu}), Xiaobo Tan ({\tt\small xbtan@msu.edu}) and Vaibhav Srivastava ( {\tt\small vaibhav@msu.edu}) are with the Department of Electrical and Computer Engineering, Michigan State University, USA. \\
Rajiv Ranganathan ({\tt\small rrangana@msu.edu}) is with the Department of Kinesiology, Michigan State University, USA.
}}

\date{}

\maketitle

\begin{abstract}
Conventional approaches to enhancing movement coordination, such as providing instructions and visual feedback, are often inadequate in complex motor tasks with multiple degrees of freedom (DoFs). To effectively address coordination deficits in such complex motor systems, it becomes imperative to develop interventions grounded in a model of human motor learning; however, modeling such learning processes is challenging due to the large DoFs. In this paper, we present a computational motor learning model that leverages the concept of motor synergies to extract low-dimensional learning representations in the high-dimensional motor space and the internal model theory of motor control to capture both fast and slow motor learning processes. We establish the model's convergence properties and validate it using data from a target capture game played by human participants. We study the influence of model parameters on several motor learning trade-offs such as speed-accuracy, exploration-exploitation, satisficing, and flexibility-performance, and show that the human motor learning system tunes these parameters to optimize learning and various output performance metrics.    
\end{abstract}


\section{Introduction}
Understanding human motor learning and relearning is of critical importance for domains such as motor skill acquisition and rehabilitation. Despite several advances in the theoretical mechanisms underlying motor learning~\cite{krakauer2019motor}, these mechanisms have been predominantly based on task paradigms that often do not consider the high dimensionality of the motor system at multiple levels, termed the `degrees of freedom problem'~\cite{bernstein1967co}. Consequently, expanding existing models to investigate motor learning in high-dimensional motor tasks presents non-trivial challenges such as simultaneous control of large DoFs, and computational complexities arising from learning in high-dimensional spaces.

A key feature of learning in such high dimensional motor tasks is to handle several trade-offs such as the classic speed-accuracy trade-off~\cite{fitts1954information}, the exploration-exploitation trade-off~\cite{sutton1998reinforcement}, and the computational simplicity vs. flexibility trade-off~\cite{bernstein1967co, flash2005motor}. Moreover, cognitive models of decision-making have investigated phenomena such as satisficing~\cite{simon1956rational} under such constraints, but these trade-offs have not been examined in motor learning contexts.
In this work, we investigate these trade-offs by developing a model of motor learning in high-dimensional motor tasks. We leverage the idea that the human nervous system uses a small number of \emph{motor synergies}, defined as a coordinated motion of a group of joints, to control and manipulate high DoF motor systems~\cite{santello1998postural, gentner2010encoding, dominici2011locomotor}. This allows us to extract low-dimensional learning representations in high-dimensional motor space, which enables the formulation of a computational model that can tackle the computational complexity of a large number of DoFs.

The approach of Ref~\cite{pierella2019dynamics} is closest to ours with the key distinction that Ref~\cite{pierella2019dynamics} considers discrete tasks, while we focus on a continuous learning paradigm, which includes adaptation in the presence of continuous visual feedback. This shift in the experimental paradigm necessitates the reformulation of the model from first principles and the inclusion of a human motion perception model.
Initial efforts towards motor learning on novel motor tasks in the presence of continuous visual feedback can also be found in Ref~\cite{kamboj2022towards} where heuristic methods are used for data fitting. Compared to the model proposed in~\cite{kamboj2022towards}, in the current work, we have a more refined model, a systematic data fitting procedure, a thorough investigation of various motor learning phenomena, and a bigger pool of experimental data to validate the efficacy of our model in explaining the motor learning behavior.

First, we develop an integrated dynamic model of motor learning in humans through the formation of internal representation, including models for perception, and forward and inverse learning. We use the motor synergies extracted from the human motor postures to create low-dimensional learning states, thus tackling the issue of increasing computational complexity with increasing DoFs of motor systems. We establish convergence properties of the proposed model and after fitting human participant data, we show the proposed model can explain human motor learning and output performance behavior well. Second, we use the proposed model to systematically investigate the influence of model parameters on several motor learning trade-offs, including, speed-accuracy, exploration-exploitation, satisficing, and flexibility-performance. This analysis reveals how the motor system optimizes the use of synergies to control large degrees of freedom, how they manage various learning trade-offs, and how satisficing behavior is observed in a motor learning setting.

\section{Motor Learning Experiment}
In our experiment, healthy participants learn a novel motor task by playing a target capture game~\cite{ranganathan2013learning}. Each participant wears a data glove, which records the movements of the $19$ finger joints. A body-machine interface (BoMI) then projects the $19$-dimensional finger movements onto the movement of a cursor on a $2$-D computer screen using a matrix. Specifically, the BoMI projects finger joint velocities $\bs u \in \real^m$ to cursor velocity $\dot{\bs x} \in \real^n$ using a matrix $C \in \real^{n\times m}$ such that
\begin{align}\label{base_dyn}
    \dot{\bs x} = C \bs u.    
\end{align}
Here, $n=2$ (the $2$-D computer screen) and $m=19$ (the $19$ finger joints). Let $\bs q \in \real^m$ be the vector of finger joint angles and thus, $\dot{\bs q} = \bs u$. 
Participants need to move the cursor to the prescribed target points one after the other, and a new target is prescribed only after the current target is captured. Participants learn how to move various finger joints to make the cursor move along a desired trajectory, and in doing so they learn coordinated finger joint movements that are consistent with the projection matrix $C$ ($C$ is unknown to the participants). This experiment involves learning in high-dimensional human motor systems (the finger joint space), whereas the output performance feedback is in the low-dimensional screen space. Since the mapping $C$ from the $19$-dimensional joint space to the $2$-dimensional screen space is many-to-one, there are multiple solutions to the inverse kinematic mapping due to the large null space of $C$. This task is redundant in the sense that a desired cursor movement can be achieved with multiple synergistic motions of finger joints.

\section{Model}
One way that motor learning in novel environments can occur is through the formation of internal representations,  including models for perception, forward learning, and inverse learning~\cite{shadmehr1994adaptive, krakauer1999ghilardi, wolpert1998internal}. 
Our formulation of these forward and inverse models follows the convention described in \cite{pierella2019dynamics}, which was originally introduced in \cite{jordan1992forward}.

There is also evidence that humans can control a high number of DoFs using a small number of coordinated joint movement patterns called synergies \cite{santello1998postural, tresch2006matrix, berniker2009simplified, leo2016synergy, al2020effects}
We, therefore, decompose the mapping matrix $C=W\Phi$, where $\Phi \in \real^{h\times m}$ is a matrix of $h$ basic synergies underlying coordinated human finger motions, and $W \in \real^{n \times h}$ represents the contributions (weights) of these synergies.
Without loss of generality, the rows of $\Phi$ are assumed to be orthogonal, i.e., the synergies contributing to the hand motions lie in orthogonal spaces. While our motor learning model is in a high-dimensional space, these synergies reduce the size of the learning space and enable efficient learning by reducing the amount of exploration.

\subsection{Human Perception Model of BoMI}
Since our experimental paradigm consists of humans learning under continuous feedback, we propose a perception model that explains how humans process continuous feedback signals. The BoMI mapping \eqref{base_dyn} is described in terms of joint and cursor velocities. However, we hypothesize that humans interacting with the BoMI perceive these velocities as increments in cursor positions and finger joint angles.
Consequently, we re-write \eqref{base_dyn} using filtered joint and cursor position data \cite{sastry1989adaptive} as

\begin{align}
    \delta \bs x = C \delta \bs q,     \label{system}
\end{align}
where $\delta \bs x$ and $\delta \bs q$ are termed \emph{filtered increments in cursor positions}, and \emph{filtered increments in joint angles}, respectively.
Here $\delta \bs q$ evolves as per the dynamic equation

\begin{equation} \label{delta_q_dot}
    \dot{\delta \bs q} = -a \delta \bs q + \bs u + \bs \xi_q,
\end{equation}
where the parameter $a$ controls the smoothing weights assigned to previous velocities, and $\bs \xi_q$ is the \emph{perceptual noise} that captures the inaccuracies in human motion perception (refer to Appendix Section~\ref{SI:sec:percep_model} for a detailed derivation). We call $a$ the \emph{perceptual recency parameter}.

\subsection{The Forward Learning Model}
Learning the forward model corresponds to learning the forward BoMI mapping matrix $C$, which maps the finger motions to the cursor motion. We represent the participant's implicit estimate of $C$ at time $t$ by $\hat{C}(t)$. Correspondingly, for the participant's change in finger joint angles $\delta \bs q \in \real^m$, the participant's estimated change in cursor position $\widehat{\delta \bs x} \in \real^n$ is determined by the forward mapping estimate $\hat{C}$:
\begin{equation}
    \widehat{\delta \bs x} = \hat{C}  \delta \bs q = \hat{W} \Phi\delta \bs q,     \label{ref_system}
\end{equation}
where $\hat{W}(t) \in \real^{n\times h}$ is the estimate of the matrix $W$. It represents the estimated weights that the human participant assigns to each synergy at time $t$. The learning space is therefore reduced from $\real^{n\times m}$ to $\real^{n\times h}$, a significant reduction, thus making our proposed model much more tractable.
It follows from \eqref{system} and \eqref{ref_system} that the estimation error
\begin{equation}
    \bs \epsilon = \delta \bs x -\widehat{ \delta \bs x} = -\tilde{W} \Phi  \delta \bs q,    \label{est_error}
\end{equation}
where $\tilde{W}(t) = \hat{W}(t) - W \in \real^{n\times h}$ is called the parameter estimation error.

Applying the gradient descent on $\frac{1}{2} \norm{\bs \epsilon}^2$ with respect to $\hat{W}$ leads to 
\begin{align}\label{fw_dyn}
    \dot{\hat{W}} &= -\gamma \nabla_{\hat{W}} \frac{1}{2}\norm{\bs \epsilon}^2 = \gamma \bs \epsilon  \delta \bs q\tran \Phi\tran,
\end{align}
where $(\cdot)\tran$ represents the transpose and $\gamma>0$ is the rate at which human participant learns the forward mapping. Accordingly, we call $\gamma$ the \emph{forward learning rate}. We posit \eqref{fw_dyn}, with $\gamma$ as a tunable parameter, as a model for human forward learning dynamics. Similar models, whose dynamics evolve as a result of reducing some error metric, have been previously used in motor learning literature \cite{pierella2019dynamics, herzfeld2014memory}, thus making our proposed model consistent with the error-based human motor learning paradigm.

\begin{figure*}[ht!]
    \centering
    \begin{subfigure}{1\linewidth}
        \centering
        \includegraphics[width=1\linewidth, height=1\linewidth, keepaspectratio]{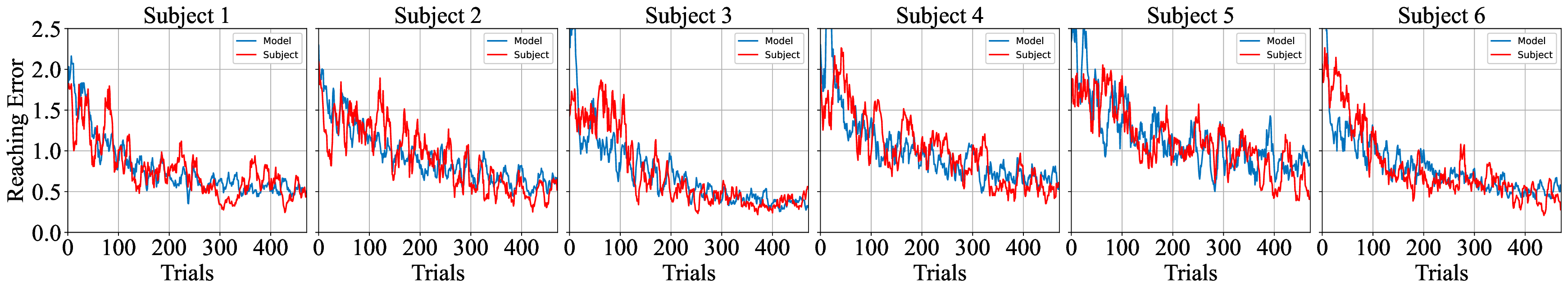}
        \caption{Reaching error averaged over a moving window of $10$ trials.}
        \label{fig:RE_subs}
    \end{subfigure}
    \\
    \begin{subfigure}{1\linewidth}
        \centering
        \includegraphics[width = 1\linewidth, height=1\linewidth, keepaspectratio]{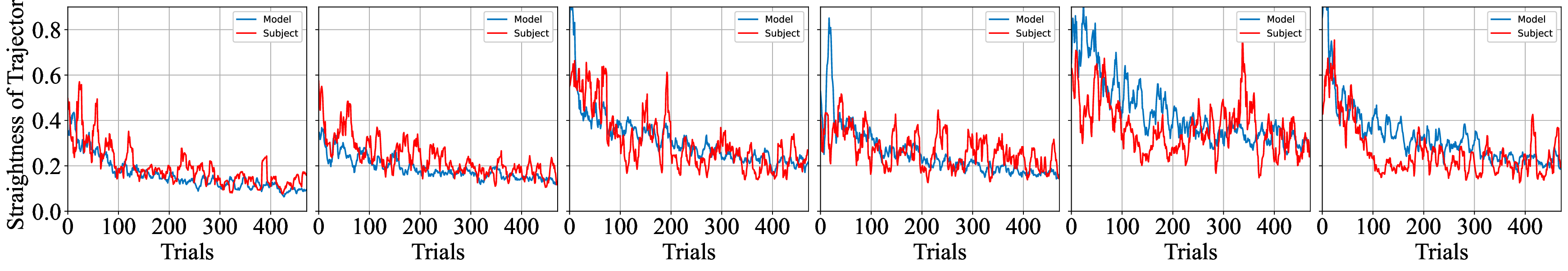}
        \caption{Straightness of trajectory averaged over a moving window of $10$ trials.}
        \label{fig:SOT_subs}
    \end{subfigure}
    \caption{\textbf{Performance measures across subjects:} Temporal evolution of (a) reaching error, and (b) straightness of trajectory for subjects (red) and the respective fitted HML model (blue) across trials.}
    \label{fig:RESOT_subs}
\end{figure*}

\subsection{The Inverse Learning Model}
Learning the inverse model requires identifying the finger joint motions that are needed to drive the cursor on the screen to the right target by achieving the target hand posture.

Given the current cursor position $\bs x$ and the desired cursor position $\supscr{\bs x} {des}$, we define $\bs {e_x} =\supscr{\bs x} {des} - \bs x$.  We hypothesize that humans choose their nominal joint velocities to minimize the cost function 
\[
J (\bs {e_x}, {\bs u} ) = \frac{1}{2} \norm{\hat C {\bs u} - k_P \bs {e_x}}^2 + \frac{\mu}{2}\norm{ {{\bs u}}}^2,
\]
where the minima associated with the first term determines a ${\bs u}$ that makes the human's estimate of cursor velocity $\hat C {\bs u}$ equal to the error-driven proportional feedback $k_P \bs {e_x}$, for some $k_P >0$, and the second term ensures that the joint velocities are not too high, where the parameter $\mu >0$ determines the admissible joint velocities. We refer to $k_P$ and $\mu$ as \emph{control} and \emph{optimality} parameters, respectively. 

Our model posits that humans compute their joint velocities by performing a gradient descent on $J$, i.e., $\dot{{{\bs u}}} = - \eta \nabla_{{\bs u}} J (\bs {e_x}, {\bs u} ) + \bs \xi_u$, where $\eta>0$ is the \emph{inverse learning rate} and $\bs \xi_u$ is the \emph{exploratory noise}, modeled as white noise with intensity $\sigma_u$ that captures the inaccuracies in the computation of the gradient and the exploration by humans in the joint velocity space. Upon simplification, the ${{\bs u}}$ dynamics are
\begin{align}
     \dot{ {\bs u}} = - \eta \left( (\Phi\tran \hat{W}\tran \hat{W} \Phi + \mu I ){{\bs u}} - k_P \Phi\tran \hat{W}\tran \bs {e_x} \right) + \bs \xi_u. \label{inv_dyn}
\end{align}
Putting everything together, the model of human motor learning dynamics 
\begin{subequations} \label{HML_model}
\begin{align}
    \dot{\delta \bs q} =& -a \delta \bs q + {\bs u} + \bs \xi_q \\
    \dot{\hat{W}} =& \gamma \bs \epsilon  \delta \bs q\tran \Phi\tran \label{HML_model:C_hat_dyn}\\
    \dot{\bs e}_x =& -W\Phi {\bs u} \\
    \!\!\! 
     \dot{ {\bs u}} =& - \eta \left( (\Phi\tran \hat{W}\tran \hat{W} \Phi + \mu I ){{\bs u}} - k_P \Phi\tran \hat{W}\tran \bs {e_x} \right) + \bs \xi_u,
    \label{HML_model:u_dyn}
\end{align}    
\end{subequations}
where ${\bs{e}}_x$ dynamics are used instead of $\bs x$ dynamics.

\section{Results}

We first establish the stability and convergence properties of the proposed HML model. We summarize our results in this section and refer interested readers to Appendix Section \ref{SI:sec:conv_analysis} for detailed proof.

For initial forward mapping estimate $\hat{C}$ sufficiently close to the actual mapping $C$, and inverse learning dynamics sufficiently faster than forward learning dynamics, the model converges to a neighborhood of the equilibria at which the model learns the true weights $W$ associated with the mapping matrix, the cursor reaches the target position, and the hand posture is stationary. Moreover, the size of the neighborhood is a function of and decreases with the exploration noise intensity.

\subsection{Comparing HML Model Performance with Human Experiment Data}
We simulated the proposed HML model with the parameters obtained by fitting experiment data from six healthy human participants to the HML model. The fitted parameters are shown in Table \ref{table:all_sub_params} and the fitting procedure is detailed in Materials and Methods.
The trajectory data from the HML model was re-sampled at the same time indices as the finger joint angle data from the data glove for direct performance comparisons. We use Forward Model Error (\texttt{FME}) \cite{pierella2019dynamics}, defined by
$$\text{\texttt{FME}} = \frac{\norm{C - \hat{C}}_2}{\norm{C}_2},$$
as the metric to quantify the convergence of the human estimate of the forward mapping matrix, $\hat{C}=\hat{W}\Phi$, to the actual forward mapping matrix $C=W\Phi$. The two output performance metrics we used to compare the performance of the HML model are reaching error (\texttt{RE}) and straightness of trajectory (\texttt{SoT}). \texttt{RE} in each trial of the human participant data is calculated as the Euclidean norm of the cursor position from the target at the end of the trial. The end of the trial is defined as the instant when the cursor position did not change by more than $0.0025$ units for $15$ consecutive samples, or $2$ seconds after the start of the movement, whichever is earlier. \texttt{SoT} is defined as an aspect ratio of the maximum perpendicular distance of the trajectory from the straight line joining the start and end points, to the straight line distance between the start and the end points.

Fig.~\ref{fig:RESOT_subs} compares the \texttt{RE} and \texttt{SoT}, and Fig.~\ref{fig:traj_FME} compares the cursor trajectories obtained from the human data with that from the fitted HML model. Results show that our model can reproduce the motor learning of human hand motions very closely. Noisy trajectories, in the beginning, are due to high exploration noise in the initial trials and the fact that the initially learned mapping leads to much of the energy being expended towards the null space of the mapping matrix; trajectories look straighter as the trials progress and the model learns the mapping.
The decrease in \texttt{FME} as a function of trials for Subject $1$ (Fig.~\ref{fig:traj_FME_e}) captures the task learning evolution for the corresponding human participant.
\begin{table*}[t!]
    \centering
   \resizebox{\textwidth}{!}{
    \begin{tabular}{||c|c|c|c|c|c|c|c||}
    \hline
        \multicolumn{2}{||c|}{\textbf{Parameters}} & \textbf{Subject 1} & \textbf{Subject 2} & \textbf{Subject 3} & \textbf{Subject 4} & \textbf{Subject 5} & \textbf{Subject 6} \\
        \hline \hline
        $\gamma$ & Forward learning rate & $0.0664$ & $0.0030$ & $0.0456$ & $0.1398$ & $0.0013$ & $0.1252$ \\
        $\eta$ & Inverse learning rate & $3.1742$ & $3.1448$ & $1.5383$ & $1.9856$ & $2.4916$ & $0.7131$ \\
        $\mu$ & Optimality parameter & $2.4581$ & $3.3056$ & $3.3072$ & $3.5735$ & $3.5382$ & $3.9744$ \\
        $k_P$ & Control parameter & $1.3098$ & $1.5965$ & $3.2714$ & $1.8976$ & $1.5569$ & $2.2515$ \\
        $\sigma_u$ & Exploration noise intensity & $0.8764$ & $1.0165$ & $1.0082$ & $0.9556$ & $1.9749$ & $0.9298$ \\
        $\sigma_q$ & Perceptual noise intensity & $0.1370$ & $0.5451$ & $0.0508$ & $0.0169$ & $0.7118$ & $0.0064$ \\
        $a$ & Perceptual recency parameter & $10$ & $10$ & $10$ & $10$ & $10$ & $10$ \\
    \hline
    \end{tabular}}
    \caption{\textbf{Model parameters:} Fitted parameter values for the $6$ subjects.}
    \label{table:all_sub_params}
\end{table*}
\begin{figure*}[hb]
    \centering
    \begin{subfigure}{0.21\linewidth}
	    \centering
        \includegraphics[width=1\linewidth, height=1.1\linewidth, keepaspectratio]{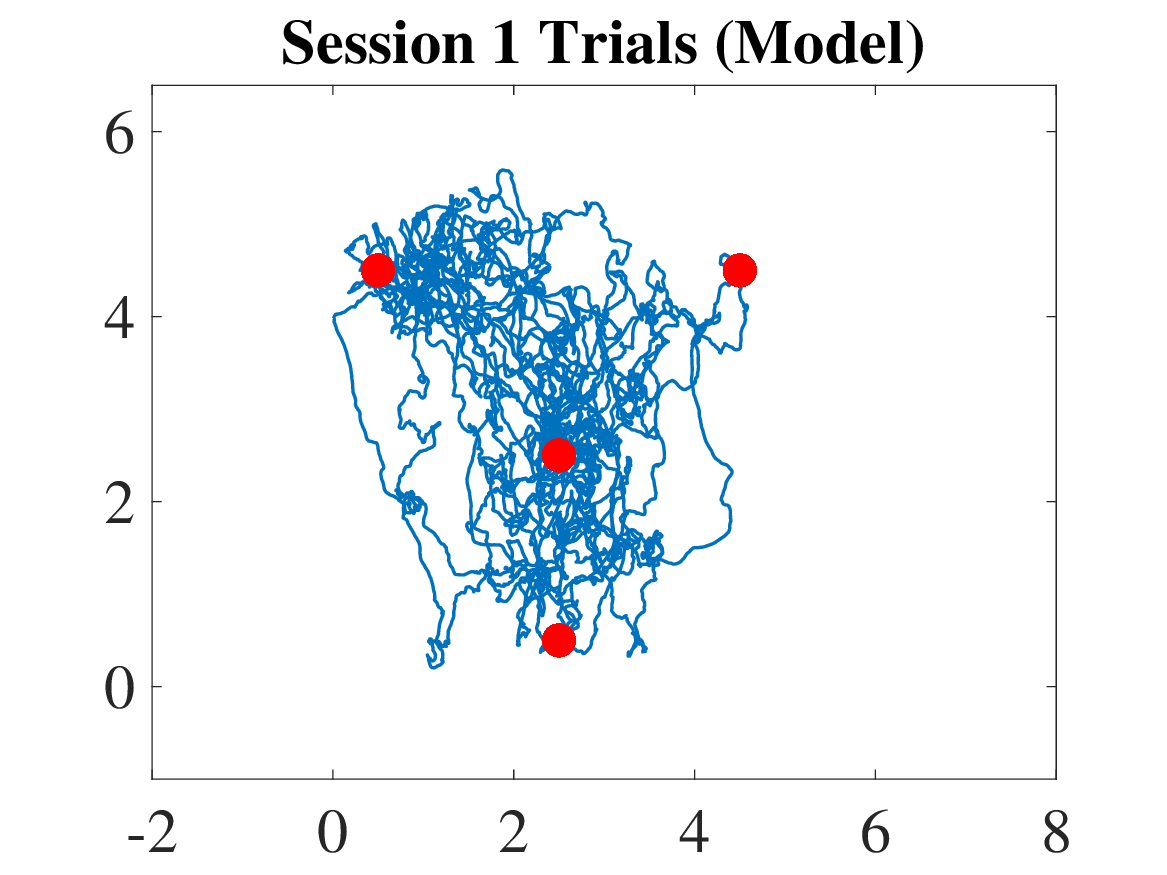}
        \caption{}
        \label{fig:traj_FME_a}
    \end{subfigure}\hspace{-1.5em}%
    ~
    \begin{subfigure}{0.21\linewidth}
	    \centering
        \includegraphics[width=1\linewidth, height=1.1\linewidth, keepaspectratio]{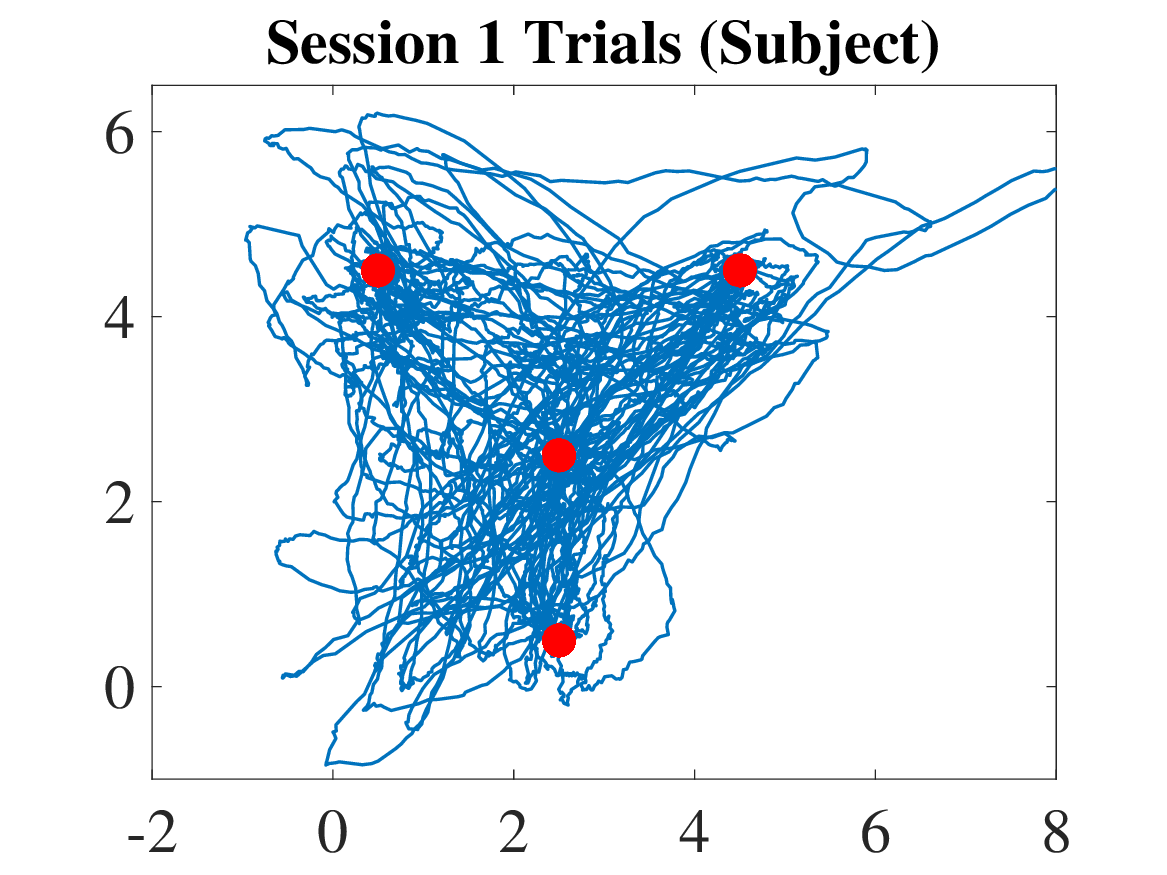}
        \caption{}
        \label{fig:traj_FME_b}
    \end{subfigure}\hspace{-1.5em}%
    ~
    \begin{subfigure}{0.21\linewidth}
	    \centering
        \includegraphics[width=1\linewidth, height=1.1\linewidth, keepaspectratio]{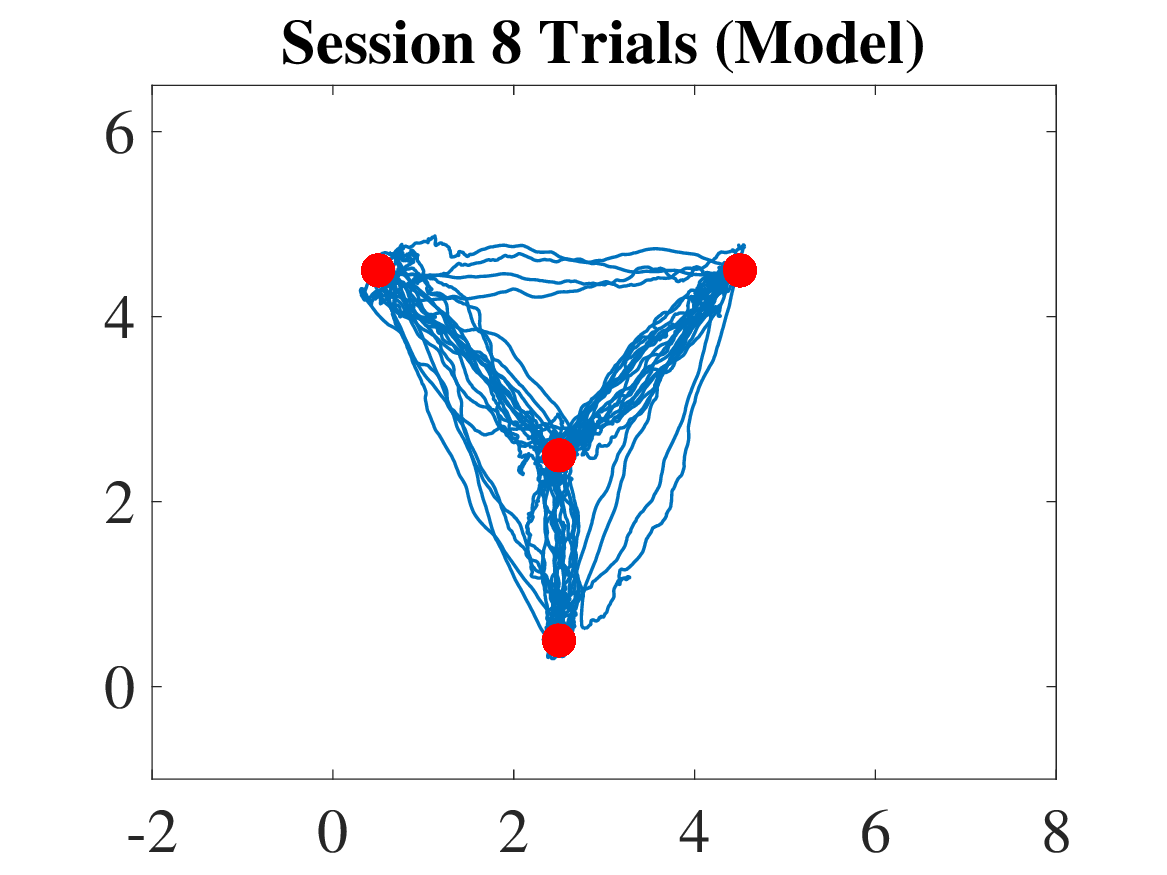}
        \caption{}
        \label{fig:traj_FME_c}
    \end{subfigure}\hspace{-1.5em}%
    ~
    \begin{subfigure}{0.21\linewidth}
	    \centering
        \includegraphics[width=1\linewidth, height=1.1\linewidth, keepaspectratio]{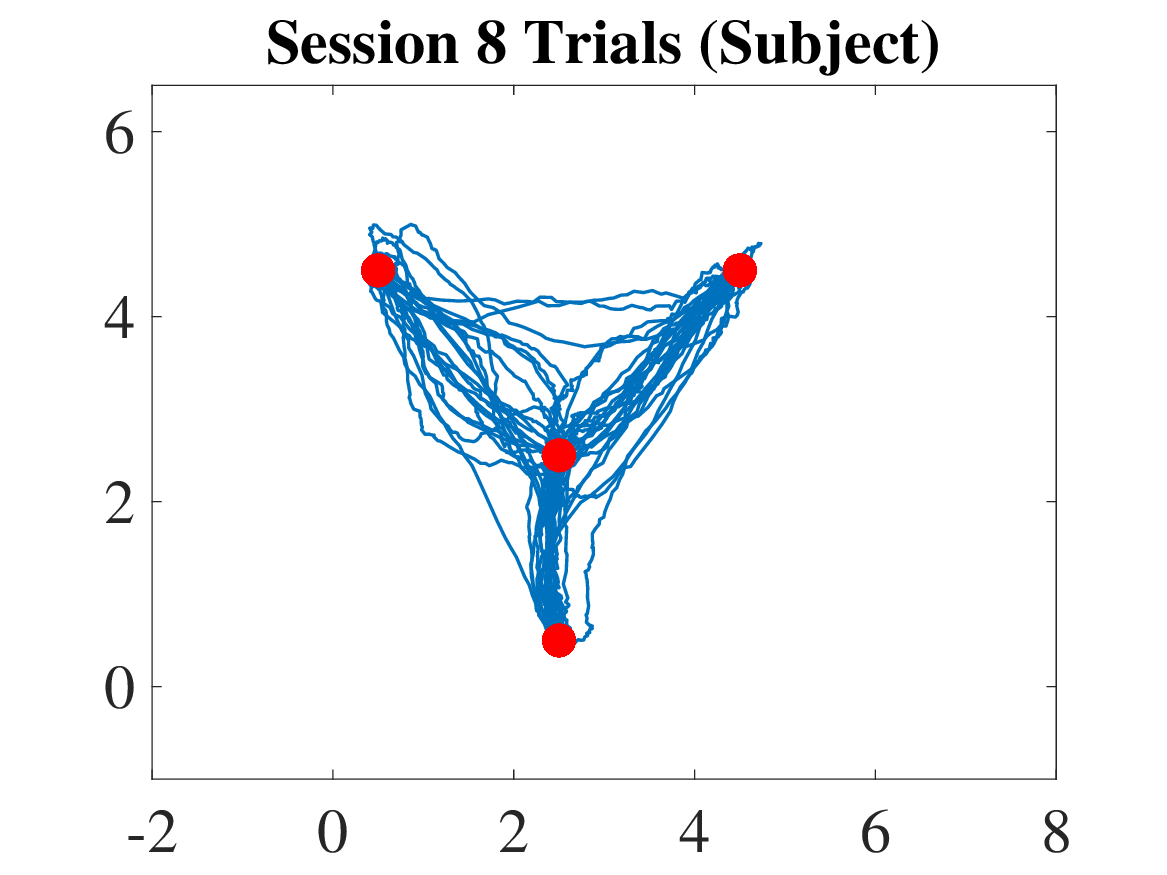}
        \caption{}
        \label{fig:traj_FME_d}
    \end{subfigure}\hspace{-1.5em}%
    ~
    \begin{subfigure}{0.21\linewidth}
	    \centering
        \includegraphics[width=1\linewidth, height=1.1\linewidth, keepaspectratio]{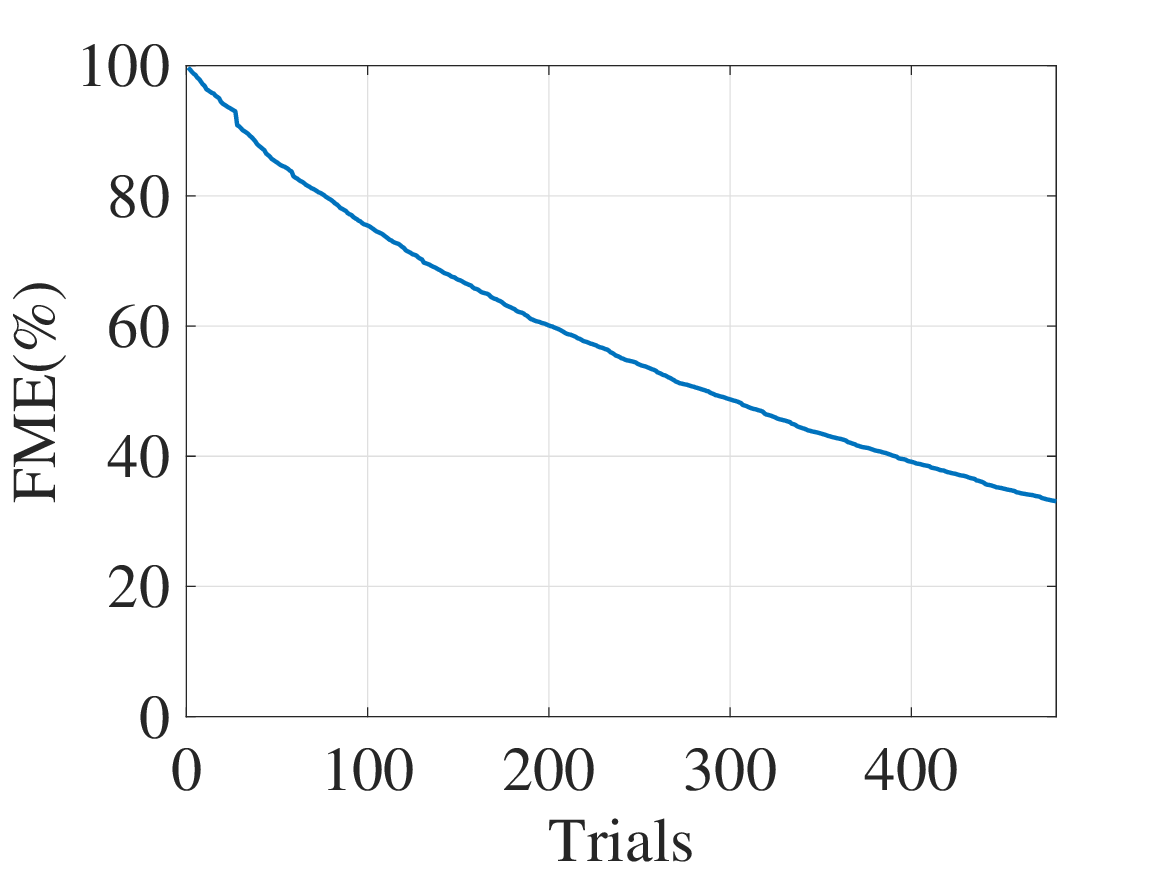}
        \caption{}
        \label{fig:traj_FME_e}
    \end{subfigure}
    \caption{\textbf{Cursor Trajectories:} Cursor trajectory data from human experiments (a), (c) and the fitted model (b), (d). As learning progresses through the $8$ sessions, the trajectories become closer to a straight line between targets, which the proposed HML model also captures. (e) shows the evolution of forward model error for the fitted model as a function of trials.}
    \label{fig:traj_FME}
    \vspace{-0.25in}
\end{figure*}

\subsection{Comparative Analysis of HML Model's Efficacy in Explaining the Motor Learning}
Pierella~\etal \cite{pierella2019dynamics} also developed a computational model of motor learning for high-dimensional episodic/discrete tasks. The extension of a model from capturing a trial-by-trial motor learning task to a continuous task with visual feedback is non-trivial. Nevertheless, we fit the model proposed in Ref \cite{pierella2019dynamics} and the proposed HML model to the experiment data. For both the models, we report the \texttt{RE} fitting error for all subjects, which captures the deviation of the model's \texttt{RE} from the subject's \texttt{RE} from the experiments in Fig. \ref{fig:RE_comp}.
We claim the following improvements over Ref \cite{pierella2019dynamics} in designing a normative model of human motor learning. First, model fitting errors in Fig. \ref{fig:RE_comp} show a superior performance of the HML model in capturing the output performance of participants. Second, since the HML model captures continuous learning dynamics, it is possible to extract/estimate a larger set of performance metrics, such as straightness of trajectory~\cite{ranganathan2013learning}.
Furthermore, our model introduces a human perception model to account for the continuous visual feedback and incorporates the ideas of motor synergies, capturing prior knowledge about a participant's motor movement behavior. The extracted motor synergies also provide a principled way of initializing the mapping matrix for the model evolution, where the weights on the synergies $W$ can be initialized instead of initializing the whole $C$ matrix randomly.

\begin{figure}[h]
    \centering
    \includegraphics[width=0.5\linewidth]{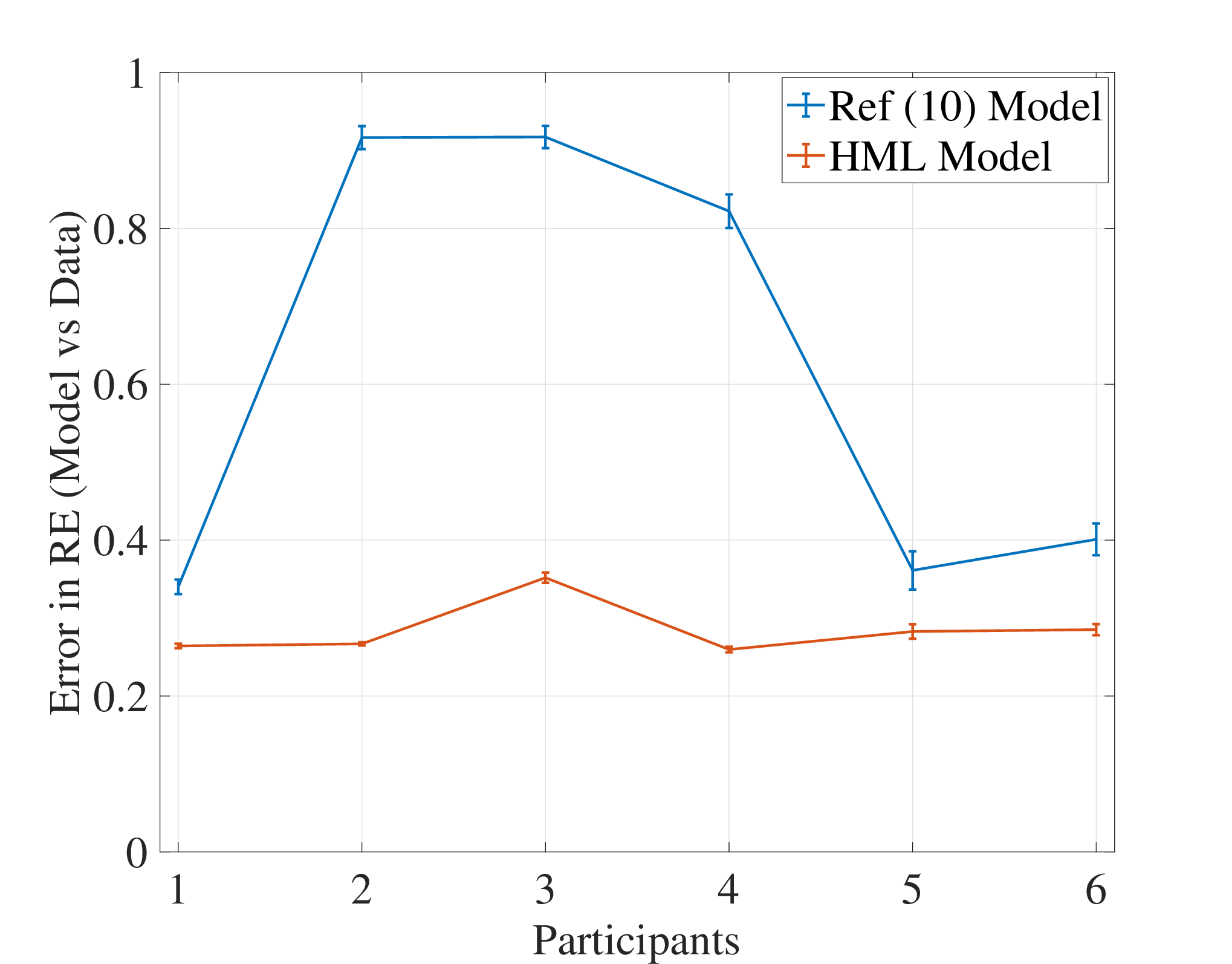}
    \caption{\textbf{Comparing HML model with Ref \cite{pierella2019dynamics} model:} Comparing the errors in \texttt{RE} curve fitting from the model in Ref \cite{pierella2019dynamics} to the HML model shows that the model in Ref \cite{pierella2019dynamics} is not as accurate as HML model in capturing the \texttt{RE} for this motor learning task.}
    \label{fig:RE_comp}
\end{figure}
\subsection{Investigating Trade-offs in Motor Learning Behavior}
We showed that the HML model can capture the human motor learning behavior in novel learning tasks for high-dimensional motor systems very closely. We now conduct a model-based investigation into the influence of parameter variations on various trade-offs in motor learning behaviors. 
Keeping all the other parameters at their fitted values (Table \ref{table:all_sub_params}), the parameters under study are varied in a range, and their effects on the performance metrics are observed.
For brevity, we only discuss the effects of parameters that had a significant effect on performance metrics.

\begin{figure}[ht]
    \centering
    \includegraphics[width=0.5\textwidth]{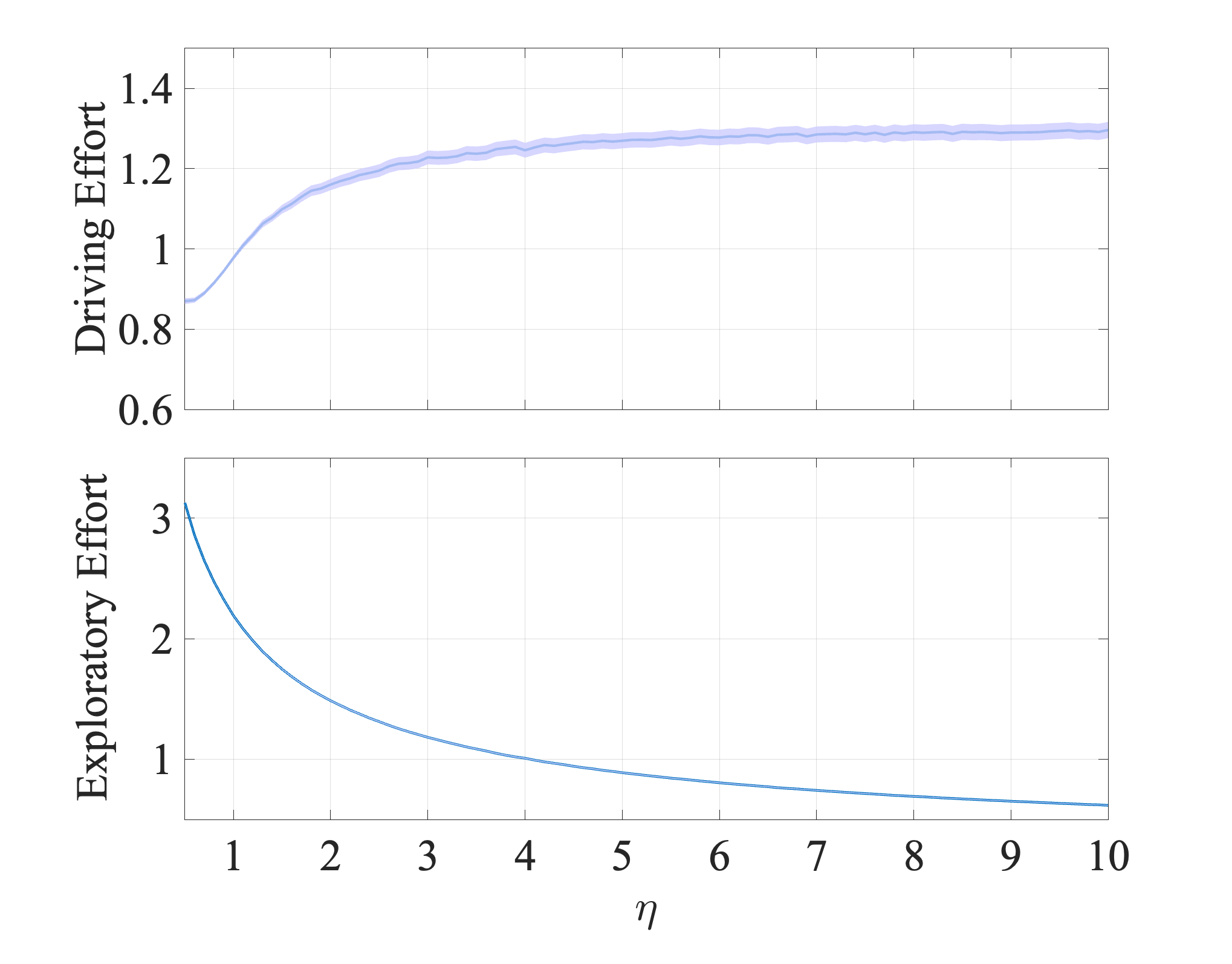}
    \caption{\textbf{Effort variation with $\eta$:} Distribution of driving and exploratory effort (averaged across $128$ Monte Carlo runs) with means and $95\%$ confidence bounds across trials as $\eta$ is varied around its fitted value $3.1742$. While driving effort increases, exploratory effort decreases initially, and both plateau past the fitted $\eta$ value. One-tailed paired t-tests over the effort values across trials reveal this plateauing effect at a significance level of p$<0.001$.}
    \label{fig:effort-eta}
\end{figure}

\subsubsection{Exploration versus Exploitation Trade-off}
Motor learning requires balancing the trade-off between the exploration of joint space to learn desired coordinated movement versus the exploitation of the currently learned dexterity to drive the cursor motion.
We define the driving (exploitatory) and exploratory efforts as the projection of joint velocities $\bf{u}$ \eqref{HML_model:u_dyn} on row space versus null space of (learned) mapping estimate $\hat{C}$ \eqref{HML_model:C_hat_dyn}, respectively.
A typical target-reaching trajectory is assumed to comprise \emph{ballistic} and \emph{learning} phases~\cite{rosenbaum2009human}. In the ballistic phase, the learned model is used for finger joint movement generation, and post this phase, movement corrections based on the visual feedback (learning phase) begin. To better capture the learning of the forward mapping, we compare these two efforts at the end of the \emph{ballistic phase} of each trial. 
The end of the ballistic phase is calculated at the point within the trial where the norm of finger-joint velocities is maximum.

We found that the inverse learning rate $\eta$ has the most pronounced effect on the exploration-versus-exploitation trade-off.
An increase in inverse learning rate in the inverse dynamics \eqref{HML_model:u_dyn} initially decreases the energy expended towards the null space of $\hat{C}$, thus increasing the driving effort. Simultaneously, for smaller values of $\eta$, exploration dominates exploitation, and thus the magnitude of joint velocities' projection on null space of $\hat{C}$ is higher. Distribution of these efforts across trials shown in Fig. \ref{fig:effort-eta} captures this effect, showing a decrease in the mean exploratory effort and an increase in mean driving effort with an increase in $\eta$.
Additionally, for $\eta$ higher than its fitted value in Fig. \ref{fig:effort-eta}, running one-tailed paired t-tests on the distribution of driving effort values across trials between $\eta$-value pairs shows a significant (at significance level p$<0.001$) difference between the distributions at $\eta$ smaller than its fitted value. The difference between the distributions plateau as we increase $\eta$ past its fitted value.
These trends and data fit suggest that human motor learning balances the exploration-exploitation trade-off by tuning $\eta$ such that driving effort is optimally expended towards the task at hand.

\subsubsection{Speed versus Accuracy Trade-off}
We define the speed of the response based on the time it takes to capture the target after the start of the movement for each trial. The target is considered captured if the cursor position does not change by more than $0.0025$ units for $15$ consecutive samples while being inside a radius of size $\rho_x$ around the target point. The accuracy of the response is defined by looking at the straightness of the cursor motion trajectory between the targets, or in other words, the root-mean-square (\texttt{RMS}) error between the cursor trajectory and a straight line joining the start and end targets in a trial.
\begin{figure}[t!]
    \centering
    \includegraphics[width=0.5\textwidth]{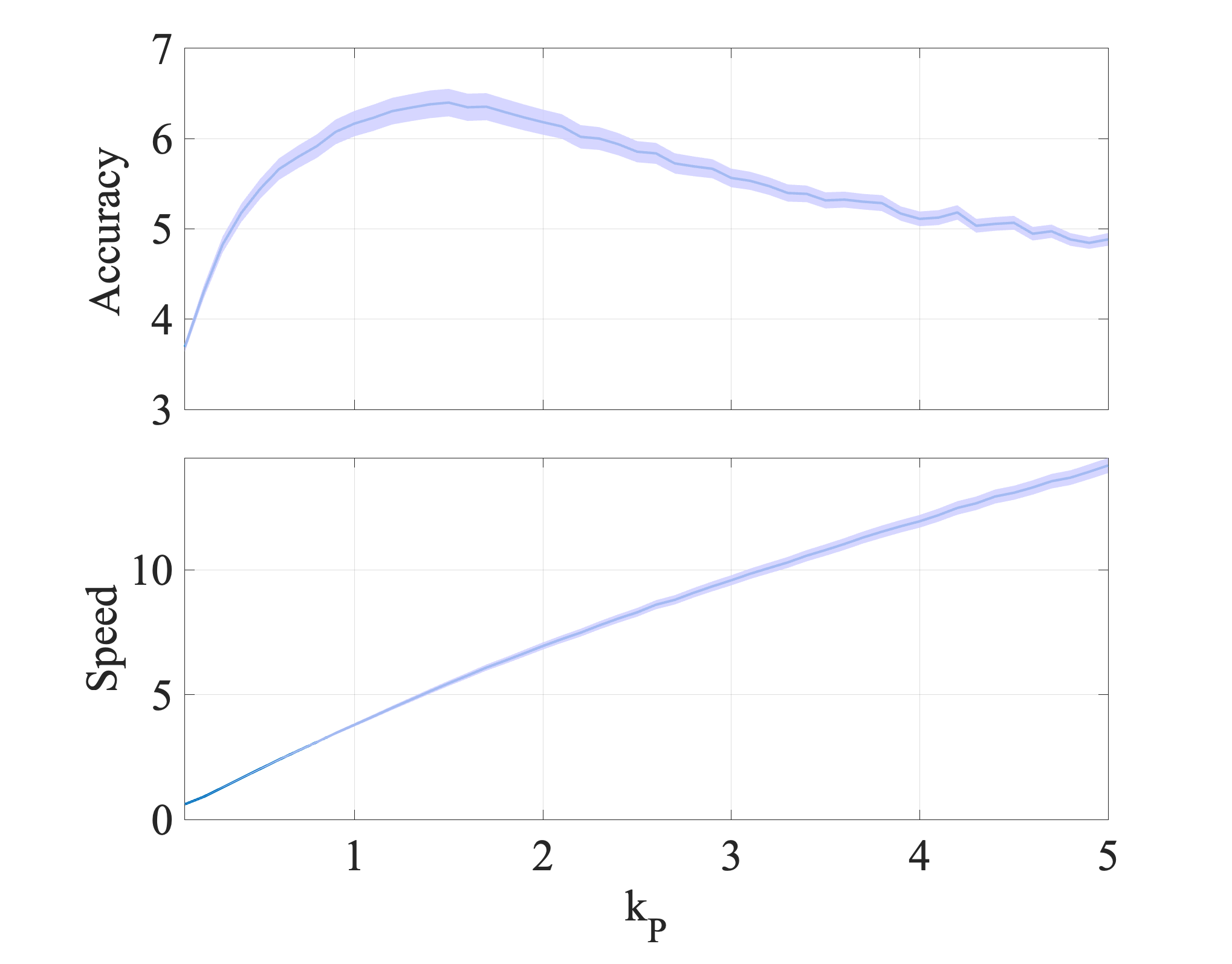}
    \caption{\textbf{Speed and accuracy variation with $k_P$:} Across trial distribution (averaged over $128$ Monte Carlo runs) of speed and accuracy with means and $95\%$ confidence bounds as $k_P$ is varied around its fitted value $1.3098$. Accuracy is highest around the fitted value (p$<0.001$) and past that speed increases while accuracy decreases.}
    \label{fig:acc_kp}
\end{figure}

We found that the control parameter $k_P$ has the most influence on the speed-accuracy trade-off. An increase in the control intensity should lead to faster trajectories and thus a decrease in the average trial time. Additionally, as $k_P$ increases, the driving effort should start dominating the exploration noise, and thus the trajectories should start to look straighter. Fig. \ref{fig:acc_kp} captures this behavior, where the speed and accuracy go up as $k_P$ increases to a certain value. 
Comparing the accuracy values across trials around the value of $k_P$ from the data fits ($1.3098$) to other values of $k_P$ using one-tailed paired t-test shows a significant increase (at the significance level p$<0.001$) in accuracy values around the fitted $k_P$ value. This may be an outcome of the motor system optimizing this speed and accuracy trade-off for constrained inverse learning effort.
\subsubsection{Satisficing}
Satisficing stands for satisfaction and sufficing \cite{simon1956rational}. In the context of human decision making it refers to the fact that humans tend to settle with 'good enough' options than seeking the optimal ones. In our context, this behavior can be seen in terms of how well the mapping is learned relative to the target size. 
We model the sufficing level by turning off the learning after a desired threshold is achieved on the Forward Modeling Error (\texttt{FME}), which captures how well the weights on synergies, $W$, have been learned. Satisfaction is modeled using a desired upper bound on the reaching error achieved for a fixed trial time, which is effectively captured by setting the target size, denoted by $\rho_x$. We thus aim to study how the learning behavior changes by changing these thresholds.

To study satisficing, we quantify the performance with the probabilities with which the trajectories enter different target sizes $\rho_x$ at the end of the prescribed trial time for different learning thresholds (different lower bounds to $\texttt{FME}$), at different trial times. Intuitively, for a particular learning level (\texttt{FME} threshold), larger target sizes (higher $\rho_x$) lead to better performance (increase success probability). Also, lower $\texttt{FME}$ thresholds are needed at small trial times compared to higher trial times to achieve a particular performance level. This is supported by the curves in Fig. \ref{fig:satisficing}, which show the probabilities with which the trajectories hit different target sizes, $\rho_x$, for increasing $\texttt{FME}$ threshold across different trial times. Satisficing behavior is observed as curves start to plateau at low $\texttt{FME}$ values, that is, lower values of the learning threshold, which enable better learning, do not necessarily improve the success probability.
\begin{figure*}[ht!]
    \centering
    \includegraphics[width=0.99\linewidth]{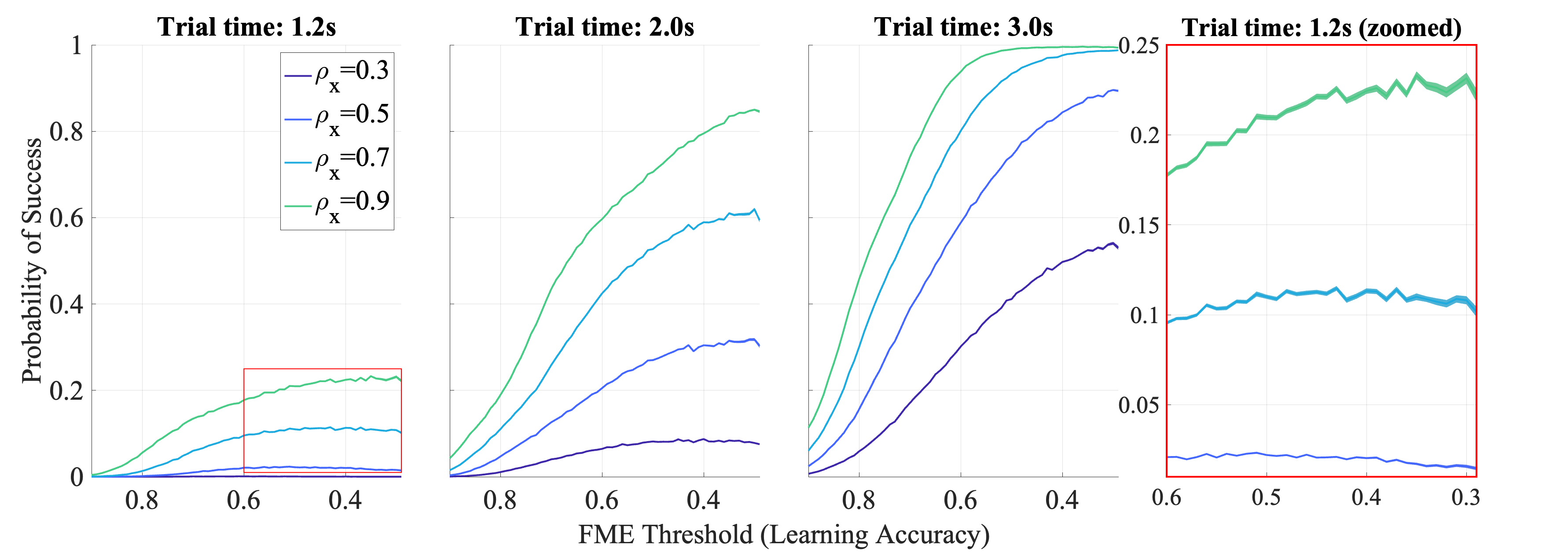}
    \caption{\textbf{Satisficing effect:} Probabilities of entering the targets as a function of target size and learning threshold ($\texttt{FME}$) for different trial times. For smaller trial times, lower learning thresholds (high learning accuracy) are required to achieve high success probabilities for the same target sizes. Satisficing behavior is observed at high learning accuracy (low \texttt{FME}) levels, where learning with higher accuracy does not necessarily increase success probabilities. The zoomed view (right) for probability curves at high learning accuracy (low \texttt{FME}) levels for $1.2$s trial time shows the satisficing effect. Curves are average success probabilities with $95\%$ confidence bounds over $1280$ Monte Carlo runs of the HML model.}
    \label{fig:satisficing}
\end{figure*}

Moreover, the learning curves appear to be sigmoid functions; the slope of the sigmoid decreases as the target size is decreased for each trial time. Also, the inflection point starts to move towards higher $\texttt{FME}$ values as both the target sizes and trial times increase. 
Furthermore, for lower satisfaction levels (higher $\rho_x$ values), the highest success probability does not correspond to the highest sufficing levels (lowest $\texttt{FME}$ thresholds), meaning, enforcing better learning could end up hurting the overall task performance. This effect is more pronounced at smaller trial times.

\subsubsection{Flexibility versus Performance} \label{sec:syn_flex}
We now explore the effect of the number of motor synergies used in the model on the forward modeling error when the BoMI mapping matrix does not belong to the span of the selected synergies. The higher the number of used synergies, the better an arbitrary mapping matrix can be represented in the space of synergies. However, there is a trade-off as learning with a higher number of synergies requires higher exploratory effort captured through exploration noise intensity $\sigma_u$. This is corroborated by results in Fig. \ref{fig:flex-xi}, which shows the variation of \texttt{FME} with $\sigma_u$ and the number of synergies {at the end of session $4$}.
We see minima in \texttt{FME} at synergies less than $19$ (the maximum number of synergies) across all exploration noise intensities. Thus, using a higher number of synergies can lead to convergence to the true mapping matrix $C$, however for limited learning time and limited exploration, using a smaller number of synergies gives better performance, as convergence can be much faster albeit not to the true $C$ matrix. This effect is more pronounced around the fit value of $\sigma_q=0.1370$; for $\sigma_q=0.1$ using $10$ synergies gives us the lowest $\texttt{FME}$.
\begin{figure}[ht]
    \centering
    \includegraphics[width=0.47\textwidth]{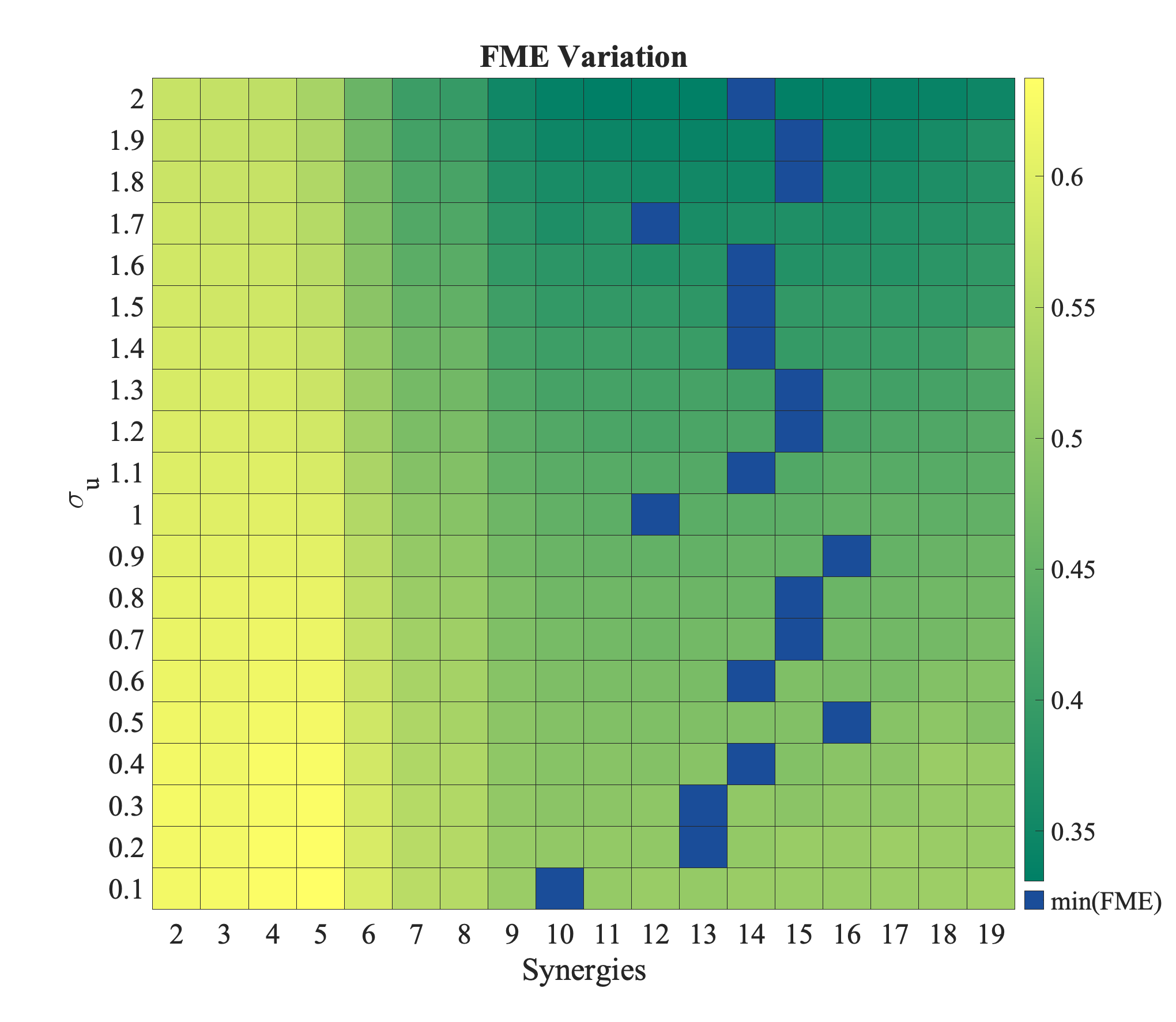}
    \caption{\textbf{\texttt{FME} variation with $\sigma_u$ and number of synergies:} \texttt{FME} as a function of increasing $\sigma_u$ and number of synergies used. For limited training time, using more synergies is not always the most optimal strategy. Minimum \texttt{FME} (blue cells) is achieved at synergies lower than $19$.}
    \label{fig:flex-xi}
\end{figure}
\section{Discussion}
The aim of this work is two-fold: to design a computational model of human motor learning for a \emph{de-novo} (novel) learning task that requires participants to capture targets on a computer screen through hand-finger motions, and then leverage this model to study various motor learning trade-offs taking place when learning high-dimensional motor tasks. 
Toward developing a computational model for human motor learning in a high-dimensional continuous learning task, we tackle the issue of high motor DoFs by leveraging motor synergies extracted from the human participant to create low-dimensional learning representations and include a perception model to account for the continuous visual feedback during the motor task. We then utilize the internal model theory of motor learning to obtain a computational HML model that comprises fast and slow varying forward and inverse learning models. We also establish the exponential convergence properties of the proposed learning model using singular perturbation arguments. Then, we fit the experimental data from $6$ human participants to the HML model showing that the proposed model captures the human motor learning behavior well. We then use the proposed model to study the following trade-offs in human motor learning: exploration-exploitation, speed-accuracy, satisficing, and flexibility-performance.

\subsection{Motor Synergies to Capture High-dimensional Motor Learning}
There has been strong evidence of the use of postural synergies by the human nervous system to control high-dimensional motor systems. Literature is replete not only with studies explaining a large number of human hand postures using a small number of postural synergies through kinematic recordings~\cite{santello1998postural, tresch2006matrix} but also with works verifying the encoding of synergistic information for hand postural control at the neural level in human brain motor cortical regions~\cite{ehrsson2002brain, gentner2006modular, santello2013neural} and using fMRI brain data to predict the hand postures. It is only natural that the participants would employ these existing motor synergies while learning to play our target capture game, and thus incorporating synergies in our proposed HML model helps explain the underlying motor learning process more closely. It should be noted however that our model can also capture learning in tasks where the participants have to learn non-synergistic coordination patterns (for example, flexion of thumb accompanied by extension of fingers) simply by not factorizing the mapping matrix into weights and synergies and using $C$ as is.

\subsection{HML Model and its Parameters}
Studies in motor learning literature have shown that motor learning in the target capture task cannot be accounted for by mechanisms of motor adaptation, but by de-novo motor skill learning, which consists of developing a controller from scratch, without interfering with pre-existing controllers. Although there are some theories on what the motor learning process for de-novo learning might look like, for example, the new controller could be assembled through reinforcement learning~\cite{cashaback2017dissociating, holland2018contribution}, in this work we entertain the possibility of de-novo learning taking place through the formation and simultaneous updating of internal models, including that of perception, and forward and inverse model, which was first introduced in~\cite{jordan1992forward}. 

The coevolution of forward and inverse models of motor learning has had a strong footing in the literature~\cite{wolpert1998internal, donchin2003quantifying, taylor2011flexible, gonzalez2011binding},~\cite{busemeyer2010cognitive}[Chapter 9]. However, it has mostly been studied for motor adaptation tasks; care must be taken when generalizing these ideas of dynamically evolving internal models from motor adaptation paradigms to motor skill learning~\cite{krakauer2019motor}. Nevertheless, few works do explore the coevolution of these internal models in de-novo learning task paradigm~\cite{krakauer2011human, pierella2019dynamics}, of which, \cite{pierella2019dynamics} came up with a computational model that could explain motor skill learning in an episodic de-novo learning task. We develop a normative model of de-novo motor learning through gradient descent on two separate error functions~\cite{krakauer1999ghilardi} for continuous feedback tasks. These error functions are constructed using the sensory prediction errors for the forward model~\cite{krakauer2006motor, tseng2007sensory}, and a weighted combination of the task error (a proxy for motor error) and control energy (to account for input energy expended) for the inverse model~\cite{abdelghani2008sensitivity, jordan1992forward}.
If we take out the perception model and the regularization of task error function from our model, we exactly recover the model in~\cite{pierella2019dynamics} for trial-by-trial learning tasks. Of course, when going from an episodic task domain to one with continuous feedback during trials, we need to have some notion of motion perception by human participants. In line with the theories of motion perception in humans, which elucidate that the visual system detects motion by spatiotemporal correlation between the stimuli~\cite{van1985elaborated}, we design our perception model as smoothing of (cursor and finger-joint) velocities over signal history to obtain the filtered increments in these signals. This perception model includes the perceptual recency parameter $a$ which controls the weights assigned to the previous velocities, and the perceptual noise $\bs \xi_q$ that accounts for inaccuracies in the filtering process.
The motivation for including the optimality parameter $\mu$ in the inverse learning model was derived from the likely assumption that a substantial cognitive component is involved in optimizing the motions when learning de-novo tasks. Danziger \etal~\cite{danziger2012influence} concluded that, as opposed to participants moving the cursor on the screen along straight lines to achieve the task if the cursor is represented as the endpoint of a two-link arm, participants tend to minimize the distance in terms of joint angles of the two-link arm. We hypothesize that a variation in $\mu$ in our proposed model would capture this effect. However, further investigation is warranted to support this hypothesis, which is beyond the scope of the current work.

\subsection{Motor Exploration and Exploratory Noise}
Through the motor learning behavior investigation, we identify the parameters that have the most significant effects on the metric trade-offs for the learning task considered in this work. 
Sternad~\cite{sternad2018it} emphasized the importance of variability in learning novel motor skills by exploring a host of possible solutions in the motor space. This effect is captured by the parameters $\eta$ and $\sigma_u$ in our model. The exploration versus exploitation trade-off analysis (see SI-Appendix Fig. \ref{SI:fig:eff_sigu}) shows that an increase in the exploration noise intensity $\sigma_u$ increases the exploration effort, and consequently, the exploration in the nullspace of $\hat{W}$ (equivalently $\hat{C}$) for possible solutions, making \texttt{FME} converge faster. Whereas an increase in $\eta$ has an opposite effect on exploration, while simultaneously increasing the driving effort towards exploitation of the learned policy. We, therefore, hypothesize that the human motor learning system tries to optimize these parameters, deviations from which result in a non-optimal distribution of exploration and driving efforts when learning a motor skill.

\subsection{Fast and Slow Learning Timescales}
Although not a central focus of our work, we also found evidence for multiple time scales of learning. Two separate time scales of learning, fast and slow processes, in motor adaptation tasks have been widely studied in the literature. Prevailing theories~\cite{smith2006interacting, tseng2007sensory, criscimagna2010size} suggest that the cerebellum is involved in the initial fast learning followed by changes in the motor cortex during the slow learning phase~\cite{paz2003preparatory}.
We found a timescale separation between the forward and inverse learning dynamics, which was also corroborated by the parameter fits where $\gamma \ll \eta$.
This suggests that in our proposed model, the forward learning dynamics evolves at a slower rate as learning the forward mapping is a process that continues over the whole course of the task. On the other hand, the inverse learning dynamics evolves at a faster rate because the participant has to quickly adapt to generate the optimal finger joint velocities which are computed as per the current (participant) estimate of the forward mapping, thus allowing the participant to complete the trial most optimally. However, a careful evaluation needs to be undertaken to verify this hypothesis.

\subsection{Future Directions}
Having a computational model of HML in high-dimensional motor systems is crucial to advancing our understanding of the underlying learning mechanisms, generating testable hypotheses, guiding the design of effective interventions, and studying the effect of practice schedules, task complexities, and feedback.
There have been many studies and experiments in the last decade that aim to design and develop exoskeletons for studying hand joint motions \cite{rashid2019wearable}, as well as rehabilitation of hand injuries \cite{zhang2014design}. Recent works \cite{castiblanco2021assist, agarwal2017subject} have developed assist-as-needed controllers for the rehabilitation of hand fingers. Another work~\cite{agarwal2019framework} has shown how adaptation in training tasks can modulate the rate of motor learning and affect rehabilitation.
Future works include leveraging the developed HML model and the insights drawn for the design of such assist-as-needed controllers and adaptive training schedules toward optimizing task performance and/or learning. This can be achieved by generating testable hypotheses about the influence of the studied model parameters on respective output metrics and designing experiments to test them. This can then help us leverage the model for designing the aforementioned controllers and training schedules.

\section{Methods}
\subsection{Experiment Procedure}
The human participant data for the current paper is from one of the groups from a previous study~\cite{ranganathan2013learning}, specifically the group with full visual feedback. The methods are briefly outlined below.

An experimental session lasts approximately forty-five minutes and consists of three stages.

\emph{Calibration Phase:}
In this phase, we perform calibration to design the forward mapping matrix specific to each participant.
Participants are asked to perform a sequence of free finger movements where they would move their hand fingers in as many different ways as possible, carefully avoiding any extreme ranges of motion, while wearing the glove.
The corresponding posture information is recorded until $4000-5000$ samples are collected ($\sim 70-90s$), centered around mean posture, and then PCA is performed. The first two principal components (PCs) are used in the mapping matrix $C$ to map the movements of hand finger joints to cursor movement in $x$ and $y$ directions. The two PCs are also scaled by the square root of their respective eigenvalues to ensure comparable ease of motion in two directions. Additionally, the boundaries of the game window are scaled in order to accommodate up to one standard deviation from the mean screen points calculated using the corresponding $C$ matrix. This is done to make sure that all the target points are reachable without resorting to an extreme hand posture.

\emph{Familiarization Phase:}
In this phase, the participants are asked to move the cursor around freely in a $5\times5$ unit grid (for $15$s, so that no motor learning takes place before the training phase) to get accustomed to the game/motions and also to ensure that most of the grid is reachable. The latter is achieved by scaling the game window based on the participant's cursor movement data and it ensures that participants can easily maneuver the cursor across the full screen. Since game window units are calculated from the scaling data during the calibration phase, they are different for different individuals.

\emph{Gameplay (training) Phase:}
Each participant plays for eight sessions, each comprised of 60 trials. A trial is comprised of one reaching movement from one target to another. In each trial, the sequence of target points is randomly selected from 4 points located at $(0.5,4.5)$, $(2.5, 0.5)$, $(2.5, 2.5)$, $(4.5, 4.5)$ units on the screen. A session always starts after the participants reach the center target, and a target is considered captured if the cursor position does not change by more than $0.0025$ units for $15$ consecutive samples while being inside a circle of radius $\rho_x$ around the target. A score is also displayed at the top of the game window which is calculated based on the accuracy of target capture (proximity to the center of the target square) and time taken to capture the target. This score does not serve any functional purpose other than motivating the participants.

\subsection{Extracting Motor Synergies} \label{matmeth:synergies}
The matrix $\Phi$ is formed using the first four principal components from the PCA performed on centered posture data recorded during the calibration phase.
While our model can work with any number of synergies, we choose four synergies because previous studies on hand and finger configuration~\cite{santello1998postural, vinjamuri2010dimensionality} have shown that four synergies spanned more than $80\%$ of the finger joint configurations.

\subsection{Fitting HML Model to Human Participant Data}
We use the human participant experiment data to obtain the HML model parameters in \eqref{HML_model} (also summarized in Table \ref{table:all_sub_params}).
The task performance is quantified by two metrics - reaching error (\texttt{RE}) and the straightness of the trajectory, to ascertain the performance of the HML model while fitting the data. Reaching error in each trial in the human participant data is calculated as detailed in the Results section.
The straightness of the trajectory (\texttt{SoT}) is defined as an aspect ratio of the maximum perpendicular distance of the trajectory from the straight line joining the start and end points, to the straight line distance between the start and the end points. Owing to the stochastic non-linearity of the proposed HML model, we use the multi-objective optimization genetic algorithm NSGA-II~\cite{deb2002fast} to find the optimal parameters over the parameter space using $\subscr{f}{\texttt{RE}}=\norm{\subscr{\texttt{RE}}{model} - \subscr{ \texttt{RE}}{data} }_2$ and $\subscr{f}{\texttt{SoT}}=\norm{\subscr{\texttt{SoT}}{model} - \subscr{\texttt{SoT}}{data}}_2$ as the two objectives. 
The subscripts denote if the metric is formed using experiment data or the HML model. The duration of each trial of the HML model was consistent with the human experiment data so as to have a fair basis for objective function calculation. NSGA-II was run for $500$ generations using the simulated binary crossover operator and polynomial mutation operator with rates $0.7$ and $0.2$, respectively, over a population size of $100$. Out of the $\min\{\subscr{f}{\texttt{RE}}\}$ fits selected from $10$ runs of NSGA-II, the parameter fits with minimum $\subscr{f}{\texttt{RE}}$ and $\subscr{f}{\texttt{SoT}} < \textup{avg}\{\subscr{f}{\texttt{SoT}}\}$ are chosen for a particular subject.
Table \ref{table:all_sub_params} shows the parameter values obtained for the human participant experiment data using NSGA-II. Perceptual recency parameter $a$ was heuristically chosen to be a large value $\sim O(10)$, and the target size $\rho_x$ value was chosen the same as the experiments.

\newpage

\appendix

\noindent
{\bf \Large Appendix}
\section{Model}
\subsection{Human Perception Model of BoMI} \label{SI:sec:percep_model}
The BoMI mapping
\begin{equation} \label{SI:base_dyn}
    \dot{\bs x} = C \bs u
\end{equation}
is described in terms of finger joint velocities $\bs u$ and cursor velocities $\dot{\bs x}$. However, we hypothesize that humans interacting with the BoMI perceive these velocities as increments in cursor positions and finger joint angles.
Consequently, we adopt a standard technique from adaptive control that allows us to rewrite the BoMI mapping equation using filtered joint and cursor position data \cite{sastry1989adaptive}. We write the dynamics \eqref{SI:base_dyn} using a mixed frequency-time notation as $s\bs x = C \bs u$, where $s$ denotes the Laplace (frequency) variable. Dividing both sides by $s + a$, for some sufficiently large $a>0$, yields
\begin{align}
    \frac{s}{s+a}\bs x =- \frac{a}{s+a} \bs x + \bs x =  &\frac{1}{s+a} C \bs u.  \label{SI:s_dom}
\end{align}
Define the signals filtered cursor position $\bs \chi = \bs x/(s+a)$, and filtered change in hand joint angles  $\delta \bs q = \bs u/(s+a)$. Equivalently
\begin{align}
    \dot{\bs \chi} &= - a\bs \chi + \bs x, \notag \\
    \dot{\delta \bs q} &= -a \delta \bs q + \bs u + \bs \xi_q, \label{SI:delta_q_dot}
\end{align}
where we add the perceptual noise $\bs \xi_q$ to capture the inaccuracies in the filtering process.
Then, in the time-domain \eqref{SI:s_dom} reduces to $- a \bs \chi + \bs x = C \delta \bs q.$
Defining $\delta \bs x = - a\bs \chi + \bs x$, we get the system equation as
\begin{align}
    \delta \bs x = C \delta \bs q.     \label{SI:system}
\end{align}
$\delta \bs x$ and $\delta \bs q$ are termed \emph{filtered increments in cursor positions}, and \emph{filtered increments in joint angles}, respectively.

\subsection{The Inverse Learning Model}
We postulate that the participant determines the joint angle velocities through gradient-based learning on a regularized quadratic error function,
\begin{align}
    \dot{{\bs u}} = - \eta \nabla_{{\bs u}} &\left(\frac{1}{2} \norm{\dot{\hat{\bs x}} - k_P \bs {e_x}}^2 + \frac{\mu}{2}\norm{ {\bs u}}^2\right) + \bs \xi_u,  \label{SI:inv_model_gd}
\end{align}
where ${\bs u} \in \real^m$, $\hat{\bs x}$ is the estimated cursor position based on the participant's estimate of mapping matrix $\hat{C}$, $\eta>0$ is the inverse learning rate, and $\bs \xi_u$ is the exploration noise.

Upon simplification, the gradient in \eqref{SI:inv_model_gd},
\begin{align} \label{SI:quad_err_sur_inv_dyn}
    \nabla_{ {\bs u}} &\left(\frac{1}{2} \norm{\dot{\hat{\bs x}} - k_P \bs {e_x}}^2 + \frac{\mu}{2}\norm{ {\bs u}}^2\right) \notag \\
    &= \nabla_{ {\bs u}} \left(\frac{1}{2} \norm{\hat{C}  {\bs u} - k_P \bs {e_x}}^2 + \frac{\mu}{2}\norm{ {\bs u}}^2\right) \notag \\
    &= \hat{C}\tran (\hat{C} {\bs u} - k_P \bs {e_x}) + \mu  {\bs u} \notag \\
    &= ( \hat{C}\tran \hat{C} + \mu I ){\bs u} - k_P\hat{C}\tran \bs {e_x},
\end{align}
which results in
\begin{align}
     \dot{{\bs u}} = &- \eta \left( (\hat{C}\tran \hat{C} + \mu I ){\bs u} - k_P\hat{C}\tran \bs {e_x} \right) + \bs \xi_u \notag \\
     = &- \eta \left( (\Phi\tran \hat{W}\tran \hat{W} \Phi + \mu I ){\bs u} - k_P \Phi\tran \hat{W}\tran \bs {e_x} \right) + \bs \xi_u. \label{SI:inv_dyn}
\end{align}

\subsection{Sufficiently Rich Perceptual Noise}
\begin{definition}[{Sufficiently Rich Signal}] \label{SI:def:PE}
    A stationary signal $\omega(t)$ is called sufficiently rich of order n if the support of the spectral measure of $\omega$, defined by,
    $$
    \mathcal{S}_{\omega}(\theta) = \int_{-\infty}^{\infty} e^{-j\theta\tau}\mathcal{R}_{\omega}(\tau)\mr{d}\tau,
    $$
    contains at least n points, where $\mathcal{R}_{\omega}(\tau) = \lim_{T\rightarrow\infty} \frac{1}{T} \int_0^T \omega(t) \omega(t+\tau) \mr{d}t$ is the autocovariance of $\omega$~\textup{\cite{ioannou1996robust}}.
\end{definition}
The perceptual noise $\bs \xi_q$ in the filtered increment in joint angle dynamics \eqref{SI:delta_q_dot} is a sufficiently rich noise satisfying Definition \ref{SI:def:PE}.
\begin{remark}
    For purposes of simulation and model fitting, $\bs \xi_q$ is treated as white noise with intensity $\sigma_q$.
    $\bs \xi_q$ is a stationary signal as per \cite[Definition 5.2.2]{ioannou1996robust}, with a flat spectral power density $\mc S_{\omega}(\theta)$. $\bs \xi_q$ is thus sufficiently rich of any order.
\end{remark}

\section{Analysis of Adaptive Control-based HML Model} \label{SI:sec:conv_analysis}

This section is dedicated to the analysis of the following proposed model of HML from a control-theoretic perspective.
\begin{subequations} \label{SI:HML_model}
\begin{align}
    \dot{\delta \bs q} =& -a \delta \bs q + {\bs u} + \bs \xi_q \\
    \dot{\tilde{C}} =& -\gamma \tilde{C} \delta \bs q\delta \bs q\tran \label{SI:HML_model:C_hat_dyn}\\
    \dot{\bs e}_x =& -C {\bs u} \\
    \!\!\! 
     \dot{ {\bs u}} =& - \eta \left( (\hat{C}\tran \hat{C} + \mu I ){{\bs u}} - k_P \hat{C}\tran \bs {e_x} \right) + \bs \xi_u,
    \label{SI:HML_model:u_dyn}
\end{align}    
\end{subequations}
where $\tilde{C} = \hat{C} - C$, and we have used $\dot{\tilde{C}} = \dot{\hat{C}}$
We aim to show that the mathematical model of human forward and inverse learning proposed here, under certain viable assumptions, evolves in a stable fashion and converges to equilibrium exponentially.

The motor learning literature suggests that the forward learning dynamics evolve on a slower timescale than the inverse learning dynamics (refer to the Discussion section in the main text). Consistent with the literature, our parameter fits suggest $\gamma \ll \eta, k_P$, i.e., $\tilde{C}$ dynamics evolve on a slower timescale, and $\bs u, \bs{e_x}$ dynamics evolve on a faster timescale.

Decomposing \eqref{SI:base_dyn} and $\widehat{\delta \bs x} = \hat{C}  \delta \bs q$ into $n$ equations (one for each cursor motion axis), we obtain $\widehat{\delta \bs x}_i=\hat{C}_i \delta \bs q$ for all $i \in \until{n}$. A similar expression can be written for \eqref{SI:base_dyn}. We analyze these equations separately, and for ease of notation, we will drop the index $i$.

Defining the slow variable $\bar{e}_x = k_P e_x \in \real$, we get the modified reaching error dynamics as
\begin{equation}   \label{SI:eq:exdot}
 \dot{\bar{e}}_x = k_P(\dot{x} \supscr{} {des} - \dot{x}) = -k_P C {\bs u}.
\end{equation}
Let $\map{f}{\real^{m} \times \real^{1\times m}}{\real^{1\times m}},\ \map{g_1}{\real^m \times \real \times \real^{1\times m}}{\real^m}, \ \map{g_2}{\real^m}{\real}$ be defined by
\begin{align}
    \begin{split}
        f(\tilde{C}) = &-\tilde{C} \delta \bs q \delta \bs q\tran,  \notag\\
        g_1({\bs u}, \bar{e}_x, \hat{C}) = &-( (\hat{C}\tran \hat{C} + \mu I){\bs u} - \hat{C}\tran \bar{e}_x ),\\
        g_2(\bs u) = &-C {\bs u}.
    \end{split}
\end{align}
Then the HML model dynamics in \eqref{SI:HML_model:C_hat_dyn}, \eqref{SI:HML_model:u_dyn}, and \eqref{SI:eq:exdot} can be equivalently written as
\begin{subequations} \label{SI:sspm_u}
\begin{align}
        \dot{\hat{C}} &= \gamma~f(\tilde{C}), \\
       \dot{{\bs u}} &= \eta~g_1({\bs u}, \bar{e}_x, \hat{C}) + \bs\xi_u\\
       \dot{\bar{e}}_x &= k_P~g_2(\bs u),
\end{align}
\end{subequations}
where we treat $\delta \bs q$ as an exogenous input for the purpose of analysis.
Rewriting system \eqref{SI:sspm_u} in the new timescale $t \mapsto \gamma t$,
we get the singularly perturbed system defined by
\begin{subequations} \label{SI:sspm_y}
\begin{align} 
        \dot{\tilde{C}} &= f(\tilde{C}), \\
        \varepsilon \dot{\bs \rho_u} &= \varepsilon \begin{bmatrix}
            \dot{\bs u} \\
            \dot{\bar{e}}_x
        \end{bmatrix} =
        \begin{bmatrix}
            \varepsilon_e g_1(\bs u, \bar{e}_x, \tilde{C}) + \varepsilon \bs \xi_u/ \gamma \\
            \varepsilon_u g_2(\bs u)
        \end{bmatrix},
\end{align}
\end{subequations}
where $\varepsilon = \varepsilon_u \varepsilon_e$, $\varepsilon_u = \gamma/\eta$, and $\varepsilon_e = \gamma/k_P$. $\varepsilon = \gamma^2/\eta k_P \ll 1$ and thus $\bs u, \bar{e}_x $ are the fast variables.

It can be verified that under the persistency of excitation of $\delta\bs q$, $({\bs u}, \bar{e}_x, \tilde{C}) =(0,0,0)$ is an isolated equilibrium of system \eqref{SI:sspm_y} in the absence of exploration noise $\bs\xi_u$.
Moreover, the functions $f, g_1, g_2$ are locally Lipschitz, and their partial derivatives up to the second-order are bounded in their respective domains containing the origin.

\subsection{Slower-timescale Forward Learning Dynamics} \label{SI:sec:fast_timescale_fwd_dyn}
The reduced system associated with \eqref{SI:sspm_y} is
\begin{align}  \label{SI:BLS}
    \dot{\tilde{C}} &= f(\tilde{C}) = -\tilde{C} \delta \bs q \delta \bs q\tran,
\end{align}
where the fast-varying states are settled at the isolated equilibrium $({\bs u}$, $\bar{e}_x) = (0,0)$.

\begin{lemma}[\bit{Stability of the Reduced System \eqref{SI:BLS}}]\label{SI:lemma:RS}
The solution of the reduced system \eqref{SI:BLS} converges globally exponentially to the origin.
\end{lemma}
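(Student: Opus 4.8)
The plan is to recognize \eqref{SI:BLS} as the classical gradient parameter-error dynamics from adaptive identification and to establish uniform global exponential stability by a Lyapunov argument sharpened through persistency of excitation. Since $\tilde{C}$ is a row vector and $\delta \bs q \delta \bs q\tran$ acts on the right, I would first transpose, setting $\bs\theta = \tilde{C}\tran \in \real^m$ and $\bs\phi = \delta \bs q$, so that \eqref{SI:BLS} becomes $\dot{\bs\theta} = -\bs\phi\bs\phi\tran\bs\theta$. Because this system is linear (hence homogeneous of degree one) in $\bs\theta$, any exponential estimate valid for arbitrary initial data is automatically global, so it suffices to produce a uniform decay rate.

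First I would record the weak decrease. Taking $V = \tfrac12\norm{\bs\theta}^2$ gives $\dot V = -\bs\theta\tran\bs\phi\bs\phi\tran\bs\theta = -(\bs\phi\tran\bs\theta)^2 \le 0$, so $V$ is nonincreasing and $\norm{\bs\theta(t)}\le\norm{\bs\theta(t_0)}$ for all $t\ge t_0$; this already yields uniform stability and boundedness of the trajectory, and in particular $\bs\phi=\delta\bs q\in\mc L_\infty$ with some bound $\norm{\bs\phi}\le\bar\phi$. The quantity I must lower-bound is then the integrated identity $V(t_0) - V(t_0+T) = \int_{t_0}^{t_0+T}(\bs\phi\tran\bs\theta)^2\,\mr{d}\tau$.

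Next I would invoke persistency of excitation of $\delta\bs q$ (already assumed above, and inherited from the sufficient richness of $\bs\xi_q$ filtered through \eqref{SI:delta_q_dot}): there exist $T,\alpha_1,\alpha_2>0$ with $\alpha_1 I \preceq \int_{t_0}^{t_0+T}\bs\phi\bs\phi\tran\,\mr{d}\tau \preceq \alpha_2 I$ uniformly in $t_0$. The target is a window contraction $V(t_0+T)\le(1-\kappa)V(t_0)$ with $\kappa\in(0,1)$ independent of $t_0$; iterating over consecutive windows gives $V(t_0+kT)\le(1-\kappa)^k V(t_0)$ and therefore a bound $\norm{\bs\theta(t)}\le M e^{-\lambda(t-t_0)}\norm{\bs\theta(t_0)}$ with explicit $M,\lambda>0$, which is precisely global exponential convergence.

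The hard part is the window contraction, i.e. lower-bounding $\int_{t_0}^{t_0+T}(\bs\phi\tran\bs\theta(\tau))^2\,\mr{d}\tau$ by $\kappa\norm{\bs\theta(t_0)}^2$. The difficulty is that PE controls $\int\bs\phi\bs\phi\tran$ tested against the \emph{frozen} vector $\bs\theta(t_0)$, whereas the integrand carries the time-varying $\bs\theta(\tau)$. I would resolve this by writing $\bs\theta(\tau)=\bs\theta(t_0)-\int_{t_0}^{\tau}\bs\phi\bs\phi\tran\bs\theta\,\mr{d}s$, so that $\bs\phi\tran\bs\theta(\tau)=\bs\phi\tran\bs\theta(t_0)+\bs\phi\tran\bs\Delta(\tau)$, and controlling the perturbation $\bs\Delta$ via the monotonicity $\norm{\bs\theta(s)}\le\norm{\bs\theta(t_0)}$ together with $\norm{\bs\phi}\le\bar\phi$. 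A clean way to close the estimate is the standard device of testing against the endpoint value, establishing $\int(\bs\phi\tran\bs\theta)^2 \ge c\,\norm{\bs\theta(t_0+T)}^2$ and combining it with $\norm{\bs\theta(t_0+T)}^2=\norm{\bs\theta(t_0)}^2-2\int(\bs\phi\tran\bs\theta)^2$, which rearranges to the desired $\kappa$. This is exactly the gradient-law PE argument of adaptive control \cite{ioannou1996robust, sastry1989adaptive}; the bookkeeping in this step is the only nontrivial obstacle, everything else being the Lyapunov computation and the geometric iteration.
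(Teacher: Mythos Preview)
Your proposal is correct and follows essentially the same route as the paper: the paper likewise takes the Lyapunov function $V_r(\tilde C)=\tfrac12\tilde C\tilde C^\top$, establishes that $\delta\bs q$ is persistently exciting (as the stable-filter output of the sufficiently rich $\bs\xi_q$), and then defers to the standard adaptive-control PE argument \cite{ioannou1996robust} that you have spelled out in detail. One minor wording issue: the bound $\bs\phi=\delta\bs q\in\mc L_\infty$ does not follow ``in particular'' from the monotonicity of $V$ but rather from $\delta\bs q$ being the output of the stable filter \eqref{SI:delta_q_dot} at the fast equilibrium; this does not affect the argument.
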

\begin{proof}
    $\delta \bs q$ dynamics with added sufficiently rich noise $\bs \xi_q$ at the equilibrium value of fast-varying states is
    \begin{align}  \label{SI:delta_q_xiq_bar}
        &\dot{\delta \bs q} = -a \delta \bs q + {\bs \xi}_q, \notag \\
        \textup{or equivalently, } &\delta \bs q = \frac{1}{s+a} {\bs \xi}_q.
    \end{align}
    Invoking~\cite[Theorem 4.2]{mareels1988persistency}, with $Q(s) = 1$ and $p(s) = s+a$, $\delta \bs q$ is persistently exciting.
The lemma is then proved using standard adaptive control techniques~\cite{ioannou1996robust}, which leverage the radially unbounded Lyapunov function
   $V_r(\tilde{C}) = \frac{1}{2}\tilde{C} \tilde{C}\tran$
and persistency of excitation of $\delta \bs q$.
\end{proof}

\subsection{Faster-timescale Inverse Learning Dynamics}
Ignoring the noise term $\bs \xi_u$ in $\bs u$ dynamics, we get the boundary layer system associated with \eqref{SI:sspm_y} is defined in the new timescale $\tau_u = t/\varepsilon$ as
\begin{subequations} \label{SI:RS}
\begin{align}
        \frac{d{\bs \rho_u}}{d\tau_u} &= \begin{bmatrix}
            \frac{d\bar{\bs u}}{d\tau_u} \\
             \frac{d\bar{e}_x}{d\tau_u}
        \end{bmatrix} =
        \begin{bmatrix}
            -1/\varepsilon_u \left( (\hat{C}\tran \hat{C} + \mu I)\bar{\bs u} - \hat{C}\bar{e}_x \right) \\
            -1/\varepsilon_e (C\bar{\bs u})
        \end{bmatrix}, 
\end{align}
\end{subequations}
where $\bs u$ is replaced by noise-free joint velocities $\bar{\bs u}$, and the slow-varying state $\hat{C}=\tilde{C} + C$ is frozen in time.

\begin{lemma}[\bit{Stability of the Boundary Layer System \eqref{SI:RS}}]\label{SI:lemma:BLS} For the boundary layer system \eqref{SI:RS}, the equilibrium point at the origin is globally exponentially stable, given that the frozen state $\tilde{C}$ satisfies $\norm{\tilde{C}} < \min \left\{ \mu, \theta_c \right\}$, where $\theta_c \geq 0$ is the largest value that satisfies the cubic inequality $\theta_c^3 - \theta_c^2 \left(2\norm{C} + 1\right) + \theta_c\left(\norm{C}^2 + 2\norm{C} + \mu\right) - \norm{C}^2 < 0$\footnote{The inequality is satisfied for $\theta_c =0$. By continuity of the cubic equation in $\theta_c$, there exists a non-zero measure set where this inequality is satisfied.}.
\end{lemma}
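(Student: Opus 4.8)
The plan is to exploit that, with the slow state $\hat{C} = \tilde{C} + C$ frozen, the boundary layer system \eqref{SI:RS} is linear and time-invariant in $(\bar{\bs u}, \bar{e}_x)$, so global exponential stability is equivalent to Hurwitzness of its system matrix. Since $A := \hat{C}\tran\hat{C} + \mu I \succeq \mu I \succ 0$ and $\norm{\hat{C}} \le \norm{C} + \norm{\tilde{C}}$, I would first record the structural identity $A\bar{\bs u} = \mu\bar{\bs u} + (\hat{C}\bar{\bs u})\hat{C}\tran$. This shows that the component of $\bar{\bs u}$ orthogonal to $\hat{C}\tran$ decays autonomously at rate $\mu/\varepsilon_u$, and that the three scalars $p = \hat{C}\bar{\bs u}$, $w = C\bar{\bs u}$, and $\bar{e}_x$ form a closed third-order LTI subsystem carrying all of the remaining dynamics. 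Global exponential stability of the full system is therefore equivalent to Hurwitzness of this $3\times 3$ subsystem, which is the cleanest object to analyze.

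Applying the Routh--Hurwitz criterion to the reduced subsystem, the trace condition holds automatically, the determinant is proportional to $-\mu\,C\hat{C}\tran$ (so the sign condition reduces to $C\hat{C}\tran > 0$), and the remaining product condition is automatic once $C\hat{C}\tran > 0$. Hence the sharp stability condition is simply $C\hat{C}\tran > 0$. Because $C\hat{C}\tran = \norm{C}^2 + C\tilde{C}\tran \ge \norm{C}\bigl(\norm{C} - \norm{\tilde{C}}\bigr)$, the requirement $\norm{\tilde{C}} < \norm{C}$ already guarantees stability; moreover the stated cubic threshold satisfies $\theta_c < \norm{C}$, since the cubic equals $-\norm{C}^2 < 0$ at $\theta = 0$ and $\mu\norm{C} > 0$ at $\theta = \norm{C}$. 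Thus the hypothesis $\norm{\tilde{C}} < \min\{\mu, \theta_c\}$ implies $C\hat{C}\tran > 0$ and certainly suffices, which already proves the lemma in the strongest form.

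To obtain the stated cubic directly, and more importantly to produce the explicit Lyapunov function that the subsequent singular-perturbation and noise analysis requires, I would instead build a composite Lyapunov function. The obstacle is that $\bar{e}_x$ has no direct damping (its equation is a pure integrator of $-C\bar{\bs u}$), so a diagonal candidate yields a $\dot{V}$ with no negative $\bar{e}_x^2$ term. I would use $V = \tfrac{\varepsilon_u}{2}\norm{\bar{\bs u}}^2 + \tfrac{\varepsilon_e}{2}\bar{e}_x^2 - \kappa\,\bar{e}_x\,\hat{C}\bar{\bs u}$: the $(\varepsilon_u,\varepsilon_e)$ weighting makes the two leading cross-couplings telescope, leaving only the residual $\bar{e}_x\tilde{C}\bar{\bs u}$ (the point where $\norm{\tilde{C}}$ enters), while differentiating the cross term along the $\hat{C}\tran\bar{e}_x$ forcing in the $\bar{\bs u}$ equation injects the missing damping $-\kappa\norm{\hat{C}}^2\bar{e}_x^2/\varepsilon_u$. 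Scaling $\kappa \propto \varepsilon_u$ then keeps the decisive damping term free of the timescale parameters, and the one leftover purely-$\bar{\bs u}$-quadratic coupling is absorbed into $-\mu\norm{\bar{\bs u}}^2$ for small $\kappa$.

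Collecting terms, bounding the leftover cross terms by Cauchy--Schwarz and Young's inequality, and using $A \succeq \mu I$, I would reduce $\dot{V}$ to a quadratic form in $(\norm{\bar{\bs u}}, |\bar{e}_x|)$. Negative definiteness requires the $\bar{\bs u}$-block to remain dominant, which forces $\norm{\tilde{C}} < \mu$, together with a determinant/discriminant condition; substituting $\norm{\hat{C}} \le \norm{C} + \norm{\tilde{C}}$ and optimizing over $\kappa$ is exactly what collapses that determinant condition to $\theta_c^3 - \theta_c^2(2\norm{C}+1) + \theta_c(\norm{C}^2 + 2\norm{C} + \mu) - \norm{C}^2 < 0$ with $\theta_c = \norm{\tilde{C}}$ (the unit Young parameters producing the ``$+1$'' terms). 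Positive definiteness of $V$ separately caps $\kappa$, and the footnote's continuity argument guarantees the feasible threshold set is nonempty. The main obstacle is precisely this final bookkeeping: keeping $V$ positive definite and the timescale factors cancelling while steering the negative-definiteness requirement to the exact stated cubic rather than a looser polynomial.
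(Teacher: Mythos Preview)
Your first approach---reduce to the closed three-dimensional subsystem in $(p,w,\bar e_x)=(\hat C\bar{\bs u},\,C\bar{\bs u},\,\bar e_x)$ and apply Routh--Hurwitz---is correct and genuinely different from the paper.  The decomposition $A\bar{\bs u}=\mu\bar{\bs u}+(\hat C\bar{\bs u})\hat C\tran$ is valid because here $\hat C$ is a row vector, and your identification of $C\hat C\tran>0$ as the sharp spectral condition is right; it also shows that the paper's cubic threshold is conservative, since (as you observe) $\theta_c<\norm{C}$ while your argument only needs $\norm{\tilde C}<\norm{C}$.  The paper instead takes a direct Lyapunov route: it chooses
\[
V_b(\bar{\bs u},\bar e_x)=\tfrac{\varepsilon_u}{2}\norm{\bar{\bs u}-\bs u^*}^2+\varepsilon_e\,\bar e_x^2,\qquad \bs u^*=(\hat C\tran\hat C+\mu I)^{-1}\hat C\tran\bar e_x,
\]
i.e.\ it shifts $\bar{\bs u}$ by its quasi-steady-state value.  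With that particular shift, $\dot V_b$ collapses immediately to $-\bs\rho_u\tran Q\bs\rho_u+2\bar e_x\tilde C\bar{\bs u}$ with a block-diagonal $Q$, so the cubic drops out of $\norm{\tilde C}<\subscr{\lambda}{min}(Q)$ and the bound $\hat C\bar C^{-1}\hat C\tran=\norm{\hat C}^2/(\norm{\hat C}^2+\mu)\ge(\norm{C}-\norm{\tilde C})^2/[(\norm{C}-\norm{\tilde C})^2+\mu]$ without any free parameter to optimize.

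Your second approach is in the right spirit but more laborious than necessary.  The cross term $-\kappa\,\bar e_x\,\hat C\bar{\bs u}$ does inject $\bar e_x$-damping, but it leaves you with an extra $\kappa pw/\varepsilon_e$ coupling and a $\kappa(\norm{\hat C}^2+\mu)p\bar e_x/\varepsilon_u$ term that you must bound and then tune $\kappa$ against; you correctly flag this bookkeeping as the obstacle, and it is exactly what the paper sidesteps by using $\bar C^{-1}\hat C\tran$ rather than $\hat C\tran$ in the cross term.  So: your Routh--Hurwitz argument proves the lemma cleanly and more sharply, while the paper's shifted Lyapunov function is what makes the subsequent composite-Lyapunov step in Theorem~\ref{SI:theorem1} go through without further work.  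If you want both, adopt the paper's $V_b$ for the Lyapunov certificate and keep your spectral argument as a remark showing the cubic bound is not tight.
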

\begin{proof}
Consider the radially unbounded Lyapunov function 
    \begin{align}
        V_b(\bar{\bs u}, \bar{e}_x) &= \frac{\varepsilon_u}{2}(\bar{\bs u} - \bs u^*)\tran (\bar{\bs u} - \bs u^*) + {\varepsilon_e} \bar{e}_x^2, \notag \\
        &= \frac{1}{2}\bs \rho_u\tran P \bs \rho_u, \notag
    \end{align}
    where $\bs u^*=\bar{C}^{-1} \hat{C}\tran \bar{e}_x$, $\bs \rho_u = [\bar{\bs u}\tran, \bar{e}_x]\tran$, $P =
    \begin{bmatrix}
        \varepsilon_u I & -\varepsilon_u\bar{C}^{-1}\hat{C}\tran \\
        -\varepsilon_u \hat{C} \bar{C}^{-1} & (2\varepsilon_e + \varepsilon_u \hat{C}\bar{C}^{-1}\bar{C}^{-1}\hat{C}\tran)
    \end{bmatrix} \succ \bs 0$, and $\bar{C} = (\hat{C}\tran \hat{C} + \mu I)$. 
    We thus have $\subscr{\lambda}{min}(P) \norm{\bs \rho_u}^2 \leq V_b(\bar{\bs u}, \bar{e}_x) \leq \subscr{\lambda}{max}(P) \norm{\bs \rho_u}^2$, where $\subscr{\lambda}{min}(\cdot)$ and $\subscr{\lambda}{max}(\cdot)$ are the minimum and maximum eigenvalues of the matrix in the argument.\\
    The time-derivative $(\frac{d}{d\tau_u})$ of $V_b$ along the boundary layer system trajectories \eqref{SI:RS} is
    \begin{align}
        \frac{d{V}_b}{d\tau_u} &= \varepsilon_u(\bar{\bs u} - \bs u^*)\tran \dot{\bar{\bs u}} + 2\varepsilon_e\bar{e}_x \dot{\bar{e}}_x, \notag \\
                &= -(\bar{\bs u} - \bs u^*)\tran(\bar{C}\bar{\bs u} - \hat{C}\tran \bar{e}_x) - 2\bar{e}_x C \bar{\bs u} = -\bar{\bs u}\tran \bar{C}\bar{\bs u} + \bar{\bs u}\tran \hat{C}\tran \bar{e}_x + \bar{e}_x \hat{C} \bar{C}^{-1} \bar{C} \bar{\bs u} - \bar{e}_x \hat{C} \bar{C}^{-1} \hat{C}\tran \bar{e}_x - 2 \bar{e}_x C \bar{\bs u}, \notag \\
                &=\bar{\bs u}\tran \bar{C} \bar{\bs u} - \bar{e}_x \hat{C}\bar{C}^{-1} \hat{C}\tran \bar{e}_x + \bar{e}_x (2\hat{C} - 2C) \bar{\bs u} = -\bs \rho_u\tran Q \bs \rho_u + 2\bar{e}_x \tilde{C} \bar{\bs u}, \notag \\
                & \leq -\subscr{\lambda}{min}(Q) \norm{\bs \rho_u}^2 + 2 \norm{\bar{e}_x} \norm{\tilde{C}} \norm{\bar{\bs u}},
    \end{align}
    where, $Q =
    \begin{bmatrix}
        \bar{C} & 0 \\
        0 & \hat{C}\bar{C}^{-1}\hat{C}\tran
    \end{bmatrix} \succ \bs 0$. Using the fact that $\norm{\bs \rho_u}^2= \norm{\bar{\bs u}}^2 + \norm{\bar{e}_x}^2$, we get
    \begin{align}
        \frac{d{V}_b}{d\tau_u} \leq -\subscr{\lambda}{min}(Q) \left( \norm{\bar{\bs u}}^2 + \norm{\bar{e}_x}^2 \right) + 2 \norm{\bar{e}_x} \norm{\tilde{C}} \norm{\bar{\bs u}} = -\bs \psi(\bar{\bs u}, \bar{e}_x)\tran \Lambda \bs \psi(\bar{\bs u}, \bar{e}_x),
    \end{align}
    where $\bs \psi(\bar{\bs u}, \bar{e}_x) = [\norm{\bar{\bs u}}, \norm{\bar{e}_x}]\tran$, and $\Lambda = 
                    \begin{bmatrix}
                        \subscr{\lambda}{min}(Q) & -\norm{\tilde{C}} \\
                        -\norm{\tilde{C}} & \subscr{\lambda}{min}(Q)
                    \end{bmatrix}.$ \\
    Note that $\subscr{\lambda}{min}(Q) = \min\{\subscr{\lambda}{min}(\bar{C}), \hat{C}\bar{C}^{-1}\hat{C}\tran\}$, where $\subscr{\lambda}{min}(\bar{C}) = \mu$.
    Since $\hat{C} = C + \tilde{C}$, we have $\norm{C} - \norm{\tilde{C}} \leq \norm{\hat{C}}$, which gives us
    \begin{align}
        \hat{C}\bar{C}^{-1} \hat{C}\tran = \hat{C}(\hat{C}\tran \hat{C} + \mu I)^{-1} \hat{C}\tran = \hat{C} \hat{C}\tran (\hat{C}\hat{C}\tran + \mu)^{-1} = \frac{\norm{\hat{C}}^2}{\norm{\hat{C}}^2 + \mu} \geq \frac{\left( \norm{C} - \norm{\tilde{C}} \right)^2}{\left( \norm{C} - \norm{\tilde{C}} \right)^2 + \mu} > 0.    \notag
    \end{align}
    Since $\subscr{\lambda}{min}(Q) > 0$, and $\Lambda$ is a symmetric matrix, it is positive definite if $\norm{\tilde{C}} < \subscr{\lambda}{min}(Q) = \min \left\{ \mu, \frac{\left(\norm{C} - \norm{\tilde{C}}\right)^2}{\left(\norm{C} - \norm{\tilde{C}}\right)^2 + \mu} \right\}$. The second bound gives us $\norm{\tilde{C}} < \frac{\left(\norm{C} - \norm{\tilde{C}}\right)^2}{\left(\norm{C} - \norm{\tilde{C}}\right)^2 + \mu}$, which is always satisfied for the largest value $\theta_c$ that satisfies the cubic inequality
    \begin{align} \label{eq:cubic_ineq}
        \theta_c^3 - \theta_c^2 \left(2\norm{C} + 1\right) + \theta_c\left(\norm{C}^2 + 2\norm{C} + \mu\right) - \norm{C}^2 < 0.
    \end{align}
    Therefore, if $\norm{\tilde{C}} \leq \min \{\mu, \theta_c\}$
    then $\frac{d{V}_b}{d\tau_u}$ is negative definite,
    and the boundary layer system is globally exponentially stable. 

\end{proof}

\subsection{Coupled Forward-inverse Motor Learning Dynamics}
We now show the stability and convergence of our proposed HML model using singular perturbation arguments. Define $\Omega_c = \left\{ \left(\tilde{C}, \bs u, \bar{e}_x\right) \in \real^{1\times m} \times \real^m \times \real \middle| \norm{\tilde{C}} \leq \delta_c \right\}$, where $\delta_c = \min \left\{ \mu, {\theta}_c \right\}$, and ${\theta}_c$ is the largest value that satisfies the cubic inequality \eqref{eq:cubic_ineq}.

\begin{theorem}[\bit{Stability of the HML Model} \eqref{SI:sspm_u}]    \label{SI:theorem1}
In the absence of exploration noise $\bs \xi_u$, there exists sufficiently small $\varepsilon^*$ such that for all $\gamma^2/\eta k_P \in (0, \varepsilon^*)$, $\Omega_c$ is positively invariant, and for all $\left(\tilde{C}(0), \bs u(0), \bar{e}_x(0) \right) \in \Omega_c$, the trajectories of the proposed HML model in \eqref{HML_model} exponentially converge to the origin.
\end{theorem}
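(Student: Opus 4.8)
The plan is to apply the classical two-timescale singular perturbation theorem through a composite Lyapunov function assembled from the slow- and fast-subsystem Lyapunov functions $V_r$ and $V_b$ established in Lemmas~\ref{SI:lemma:RS} and~\ref{SI:lemma:BLS}. Since \eqref{SI:sspm_y} has been reduced to $n$ decoupled per-axis problems (a row $\tilde C \in \real^{1\times m}$ together with a scalar $\bar e_x$), it suffices to prove exponential convergence for one axis and then stack the $n$ axes to recover the claim for \eqref{HML_model}. For a fixed axis I would introduce the candidate
\[
\nu(\tilde C, \bs\rho_u) = (1-d)\, V_r(\tilde C) + d\, V_b(\bar{\bs u}, \bar e_x),
\]
with a weight $d \in (0,1)$ fixed later, where $V_r(\tilde C) = \tfrac12 \tilde C \tilde C\tran$ and $V_b = \tfrac12 \bs\rho_u\tran P \bs\rho_u$ is the quadratic form of Lemma~\ref{SI:lemma:BLS}. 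Both summands obey the two-sided quadratic bounds proved in the lemmas, so $\nu$ is radially unbounded and sandwiched between multiples of $\norm{\tilde C}^2 + \norm{\bs\rho_u}^2$, exactly the structure needed to conclude exponential stability once $\dot\nu$ is shown negative definite.

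The crux is computing $\dot\nu$ along the full coupled dynamics \eqref{SI:sspm_y}. Because $\delta \bs q$ is treated as an exogenous input, the slow dynamics $\dot{\tilde C} = f(\tilde C) = -\tilde C \delta \bs q \delta \bs q\tran$ do not depend on the fast states, so the slow part simply inherits the exponential decay rate $c_r>0$ of Lemma~\ref{SI:lemma:RS} (itself a consequence of persistency of excitation of $\delta \bs q$). Rescaling to the slow timescale, the fast part contributes $\tfrac{d}{\varepsilon}\tfrac{dV_b}{d\tau_u} \le -\tfrac{d}{\varepsilon}\,c_b\norm{\bs\rho_u}^2$ via Lemma~\ref{SI:lemma:BLS}, valid precisely because $\norm{\tilde C}\le\delta_c$ on $\Omega_c$; note this bound already absorbs the internal slow-fast off-diagonal term $2\bar e_x\tilde C\bar{\bs u}$. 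The single genuinely new interconnection term is $d\,\tfrac{\partial V_b}{\partial\tilde C}\dot{\tilde C}$, which arises because $V_b$ depends on the frozen slow state through $\bs u^* = \bar C^{-1}\hat C\tran\bar e_x$ and $P(\hat C)$ with $\hat C = \tilde C + C$. I would bound this cross-term by $c_{cr}\norm{\tilde C}\,\norm{\bs\rho_u}$ using the local Lipschitz continuity and bounded second derivatives of $f,g_1,g_2$ stated after \eqref{SI:sspm_y}, together with the uniform bound on $\bar C^{-1}$ on $\Omega_c$ (since $\bar C \succeq \mu I$).

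Assembling these estimates gives $\dot\nu \le -\bs\zeta\tran M(d,\varepsilon)\bs\zeta$ with $\bs\zeta = [\norm{\tilde C}, \norm{\bs\rho_u}]\tran$ and the representative matrix
\[
M(d,\varepsilon) = \begin{bmatrix} (1-d)c_r & -\tfrac12 d\,c_{cr} \\ -\tfrac12 d\,c_{cr} & d\,c_b/\varepsilon \end{bmatrix}.
\]
Its fast diagonal entry scales like $1/\varepsilon$ while the off-diagonal coupling stays $O(1)$, so for any fixed $d\in(0,1)$ there is $\varepsilon^*>0$ making $M(d,\varepsilon)\succ \bs 0$ — hence $\dot\nu$ negative definite — for all $\varepsilon = \gamma^2/(\eta k_P)\in(0,\varepsilon^*)$. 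This is the standard ``small $\varepsilon$ lets the fast diagonal dominate'' step, and combined with the quadratic bounds on $\nu$ it yields global exponential decay of $(\tilde C,\bs\rho_u)$ within $\Omega_c$.

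What remains, and what I expect to be the main obstacle, is certifying that $\Omega_c$ is positively invariant so that the hypothesis $\norm{\tilde C}\le\delta_c$ of Lemma~\ref{SI:lemma:BLS} holds along the \emph{entire} trajectory: the fast-subsystem bound used to dominate the cross-term is itself valid only while $\tilde C\in\Omega_c$, a circularity that must be broken cleanly. Fortunately the slow subsystem supplies an unconditional monotone decrease, $\tfrac{d}{dt}\tfrac12\norm{\tilde C}^2 = -\norm{\tilde C\delta\bs q}^2\le 0$, so $\norm{\tilde C(t)}\le\norm{\tilde C(0)}\le\delta_c$ for all $t\ge0$ and $\Omega_c$ is forward invariant, making Lemma~\ref{SI:lemma:BLS} applicable uniformly in time; the delicate accounting is ensuring the rate $c_r$ and the cross-term bound $c_{cr}$ are uniform over $\Omega_c$ so that a single $\varepsilon^*$ works. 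Finally, per-axis exponential convergence of $(\tilde C,\bar{\bs u},\bar e_x)$ transfers, through $\hat C=\tilde C+C$, $\bs e_x=\bar e_x/k_P$, and the relation $\hat C=\hat W\Phi$ in \eqref{ref_system}, into exponential convergence of the trajectories of \eqref{HML_model} to the stated equilibrium where $\hat W\to W$, $\bs e_x\to\bs 0$, and $\bs u\to\bs 0$, completing the argument.
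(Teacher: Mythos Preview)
Your proposal is correct and follows essentially the same route as the paper's own proof: establish positive invariance of $\Omega_c$ from the unconditional monotone decrease $\tfrac{d}{dt}\tfrac12\norm{\tilde C}^2=-\norm{\tilde C\,\delta\bs q}^2\le 0$, then invoke the composite Lyapunov construction of singular perturbation theory (the paper simply cites \cite[Theorem~11.4]{HKK:02}) built from $V_r$ and $V_b$. In fact you carry out explicitly the steps that the paper only sketches; one small caveat is that the interconnection term $\partial_{\tilde C}V_b\,\dot{\tilde C}$ is quadratic in $\bs\rho_u$ (since $V_b$ is quadratic with $P=P(\hat C)$), so the natural bound is $c_{cr}\norm{\tilde C}\,\norm{\bs\rho_u}^2$ rather than $c_{cr}\norm{\tilde C}\,\norm{\bs\rho_u}$, but on $\Omega_c$ this is dominated by $c_{cr}\delta_c\norm{\bs\rho_u}^2$ and is still absorbed by the $c_b/\varepsilon$ diagonal entry, leaving the conclusion intact.
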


\begin{proof}
    From the proof of Lemma \ref{SI:lemma:RS}, we have $\norm{\tilde{C}(t)} \leq e^{-\gamma \subscr{\lambda}{min}(\delta \bs q \delta \bs q\tran)t}\norm{\tilde{C}(0)} \leq \norm{\tilde{C}(0)}$, implying $\norm{\tilde{C}(t)}$ is non-increasing. Thus, $\norm{\tilde{C}(0)} < \delta_c = \min\{\mu, {\theta}_c\}$ implies $\norm{\tilde{C}(t)} < \delta_c $ for all $t>0$.
    Therefore, from Lemma \ref{SI:lemma:BLS}, the boundary layer system is globally exponentially stable. Using similar techniques as in \cite[Theorem 11.4]{HKK:02}, the exponential stability of the system can be proved by constructing a composite Lyapunov function using the Lyapunov functions $V_r(\tilde{C})$ and $V_b(\bs u, \bar{e}_x)$, for $(\tilde{C}, \bs u, \bar{e}_x)$ starting in the set $\Omega_c$.
\end{proof}

\begin{remark}
    We show exponential convergence of HML model states \eqref{SI:sspm_u} to the origin in the absence of exploration noise in Theorem \ref{SI:theorem1}. With the stochastic exploration noise $\bs \xi_u$ added to the model, using ideas similar to \cite[Section 8.6]{kushner1971introduction}, it can be shown that the HML model trajectories enter and stay in a ball of radius $O(|\sigma_u|)$ centered around the origin.
\end{remark}

\section{HML Model-based Investigation into Motor Learning Behavior}
\subsection{Exploration versus Exploitation Trade-off}
Apart from the inverse learning parameter $\eta$ which had the most significant effect on the exploration versus exploitation trade-off, we also show how the exploration noise intensity affects this performance metric in Fig. \ref{SI:fig:eff_sigu}.

\begin{figure}[h]
    \centering
    \includegraphics[width=0.4\linewidth, height=1.1\linewidth, keepaspectratio]{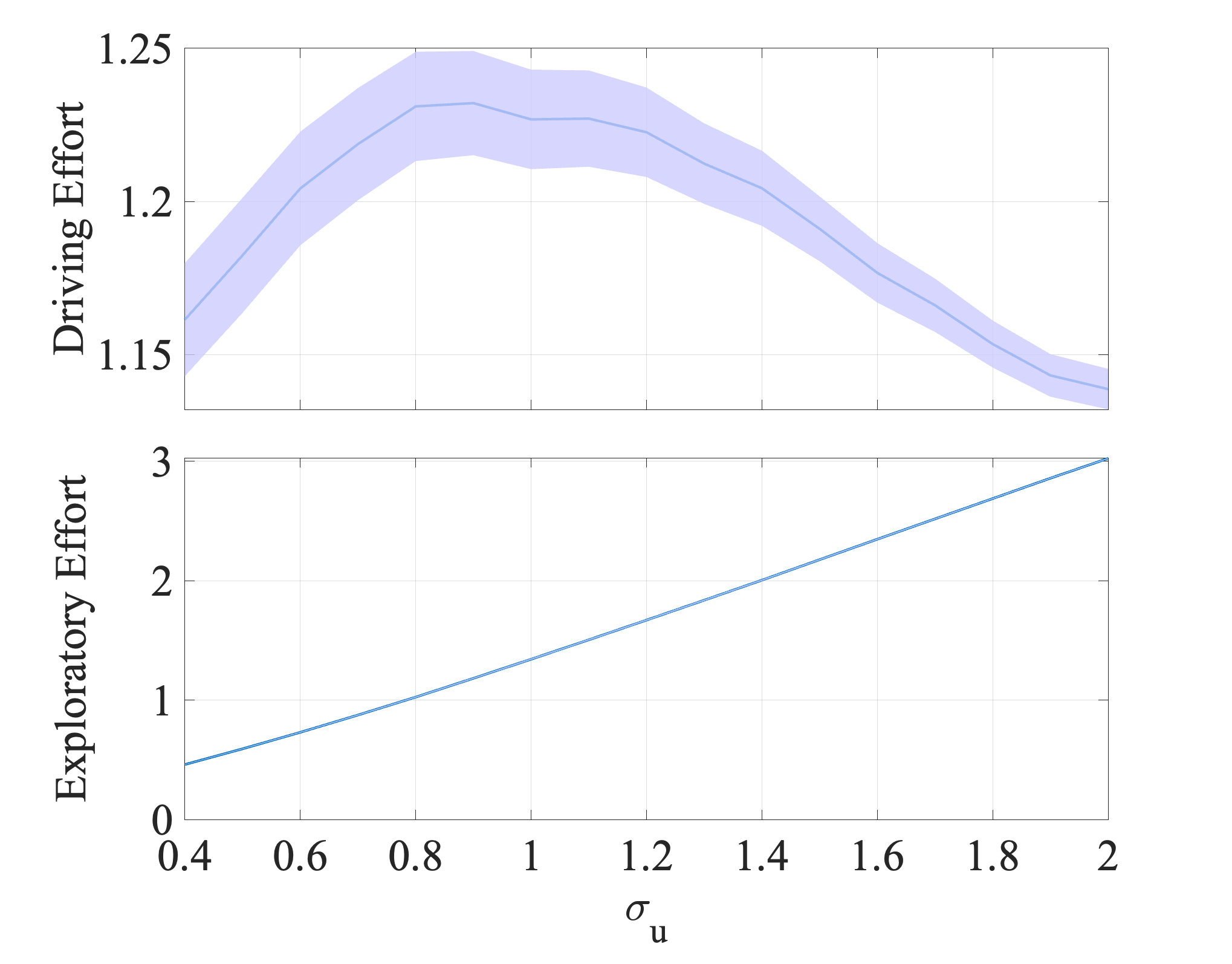}
    \caption{\textbf{Effort variation with $\sigma_u$:} Distribution of driving and exploratory effort across trials as $\sigma_u$ is varied around its fit value $0.8764$. Driving effort is highest around the fitted value of $\sigma_u$ (p$<0.01$), while exploratory effort increases monotonically with $\sigma_u$.}
    \label{SI:fig:eff_sigu}
\end{figure}

\subsection{Comparative Analysis of HML Model's Efficacy in Explaining the Motor Learning}
Fig. \ref{SI:fig:FME_comp} shows the evolution of \texttt{FME} from the HML model and the model in Ref~\cite{pierella2019dynamics}, both fitted on the experimental data of Subject $6$. Looking at Fig. \ref{SI:fig:FME_comp}, \texttt{FME} from our proposed model aligns more closely to the \texttt{RE} and \texttt{SoT} observed from the experimental data for Subject $6$ as compared to the \texttt{FME} from the Ref~\cite{pierella2019dynamics} model. There's a sharp initial decrease in \texttt{FME} out of HML model Fig.~\ref{SI:fig:FME_comp}, and also in the \texttt{RE} and the \texttt{SoT} Fig.~\ref{SI:fig:SOT6}, \ref{SI:fig:RE6} for Subject $6$. Whereas, \texttt{FME} out of the Ref~\cite{pierella2019dynamics} shows a slow initial decrease, followed by a fast decay much later in the game trials.

\begin{figure}[ht]
    \centering
    \begin{subfigure}{0.25\linewidth}
        \centering
        \includegraphics[width=1\linewidth, height=1.1\linewidth, keepaspectratio]{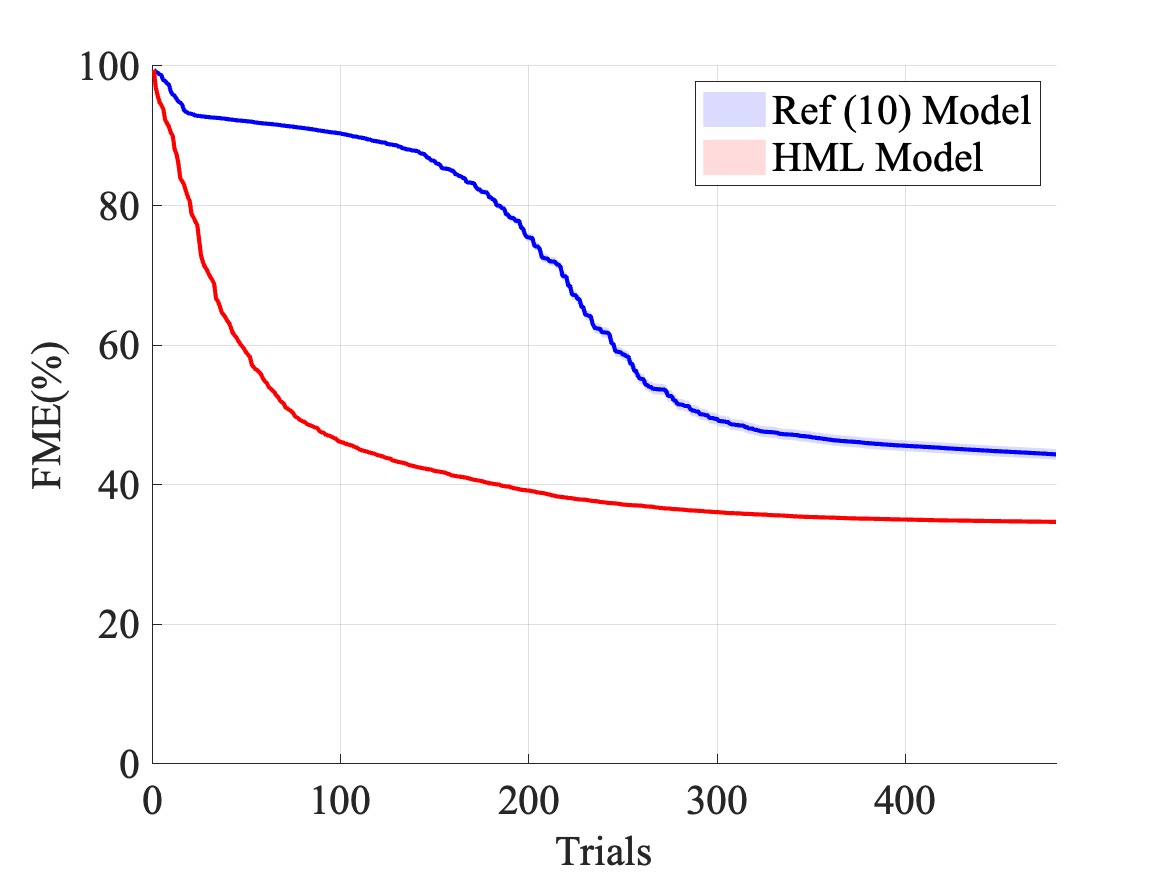}
        \caption{}
        \label{SI:fig:FME_HMLvPM}
    \end{subfigure}
    ~
    \begin{subfigure}{0.25\linewidth}
        \centering
        \includegraphics[width=1\linewidth, height=1.1\linewidth, keepaspectratio]{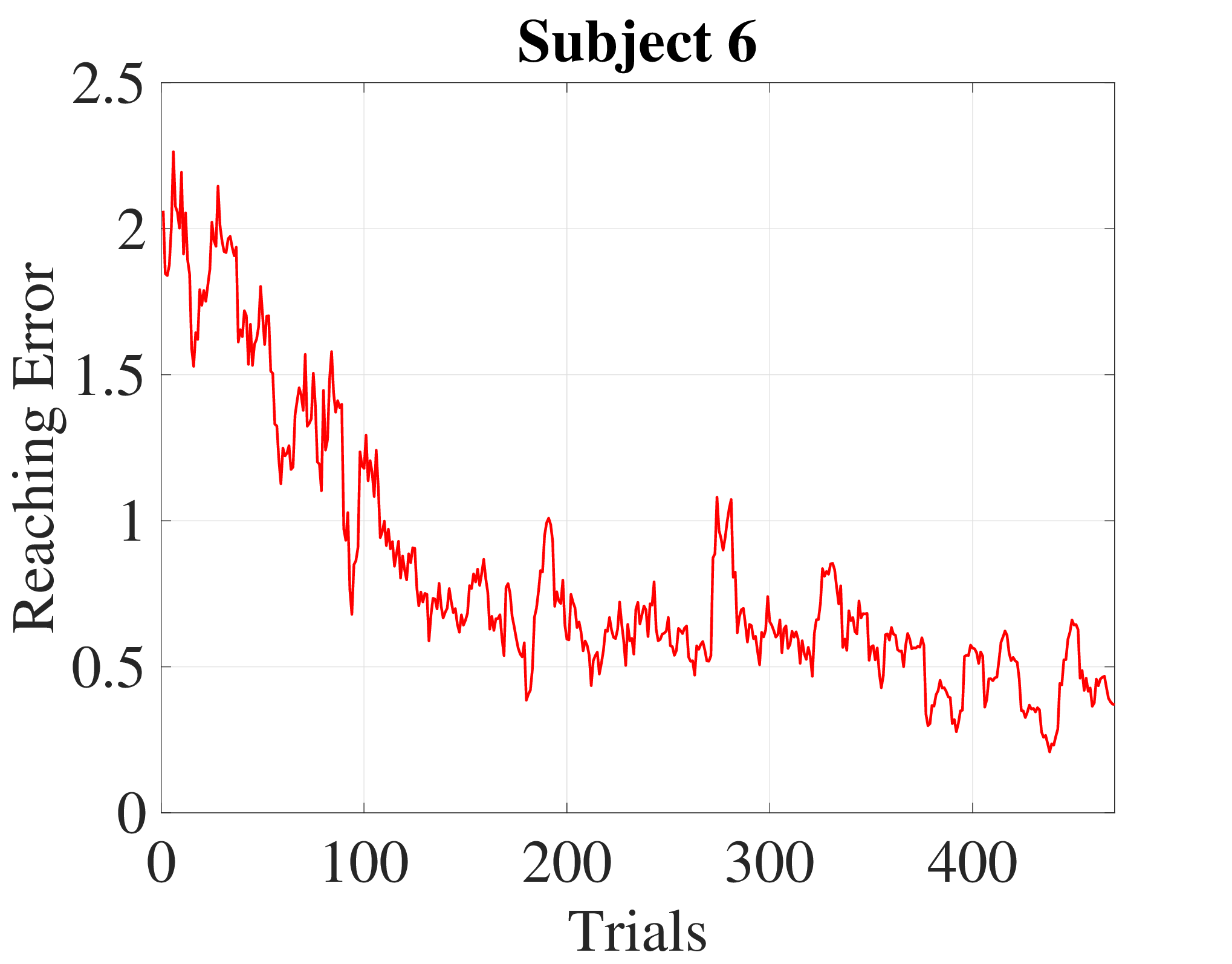}
        \caption{}
        \label{SI:fig:RE6}
    \end{subfigure}
    ~
    \begin{subfigure}{0.25\linewidth}
        \centering
        \includegraphics[width=1\linewidth, height=1.1\linewidth, keepaspectratio]{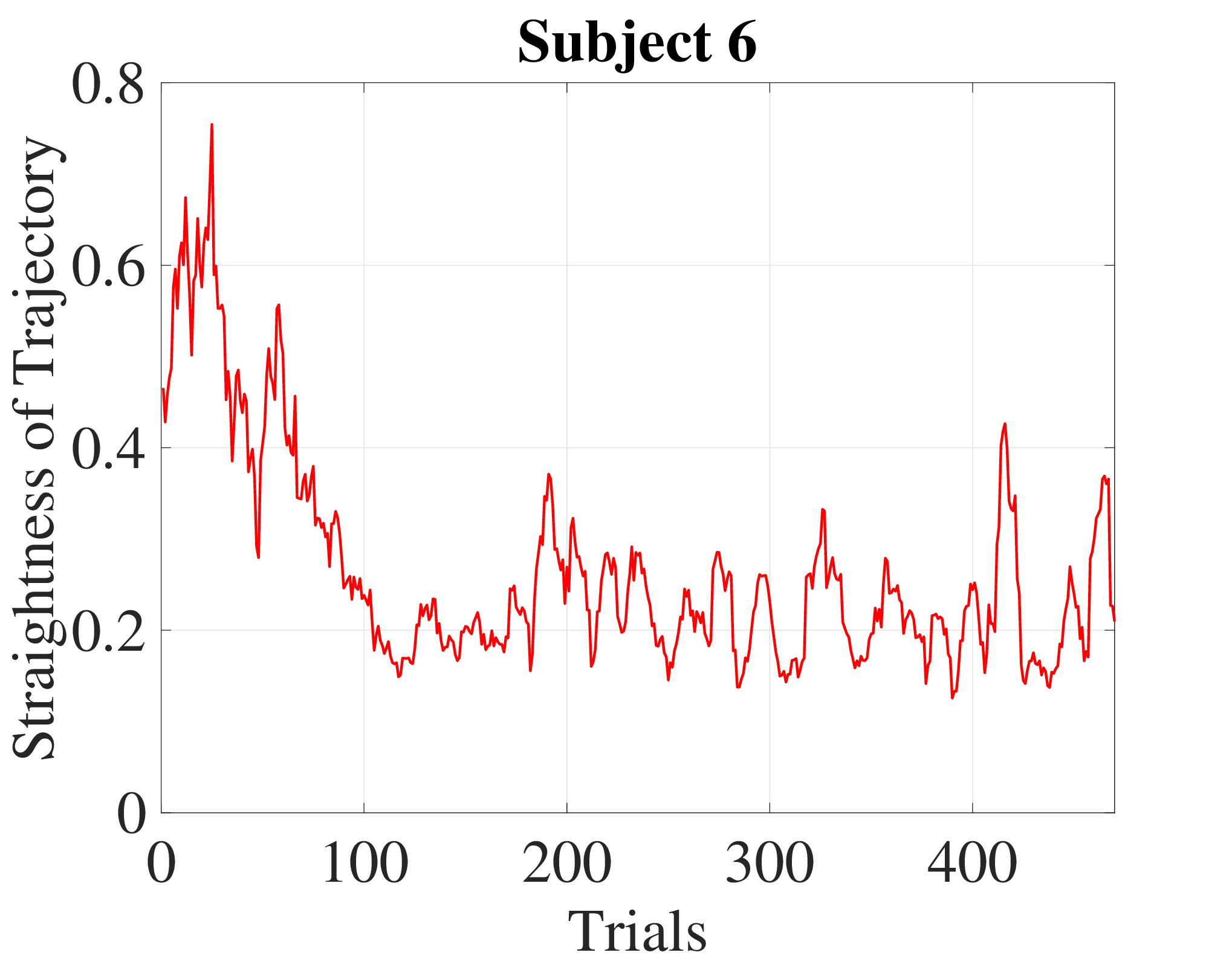}
        \caption{}
        \label{SI:fig:SOT6}
    \end{subfigure}
    \caption{\textbf{Comparing HML model with Ref \cite{pierella2019dynamics} model:} (a) Comparing the \texttt{FME} curves from the model in Ref \cite{pierella2019dynamics} to the HML model shows that the model in Ref \cite{pierella2019dynamics} is not as accurate as HML model in capturing the human skill state (represented by the participant's estimate of BoMI mapping matrix $\hat{C}$) for this motor learning task. (b) and (c) show the \texttt{RE} and \texttt{SoT} evolution curves for Subject $6$.}
    \label{SI:fig:FME_comp}
\end{figure}


\nolinenumbers

\bibliographystyle{plain}
\bibliography{motor_learning,mybib}

\end{document}